\documentclass[aps,pra,twocolumn,superscriptaddress,floatfix]{revtex4-2}
\usepackage{graphicx} 
\usepackage{amsmath}
\usepackage{amssymb}
\usepackage{physics}
\usepackage{enumitem}
\usepackage[dvipsnames]{xcolor}
\usepackage{orcidlink}

\usepackage{amsthm}
\newtheorem{lemma}{Lemma}
\newtheorem{theorem}{Theorem}
\newtheorem{proposition}{Proposition}
\newtheorem{corollary}{Corollary}
\newtheorem{remark}{Remark}

\usepackage{hyperref}
\hypersetup{
    colorlinks=true,
    linkcolor=blue,
    citecolor=blue,
    urlcolor=blue
}
\usepackage{hypcap}

\newcommand{\E}{\mathbb{E}}
\newcommand{\V}{\mathbb{V}}

\newcommand{\threesum}{\sum_{i,j,k}}
\newcommand{\threesumprime}{\sum_{i',j',k'}}

\newcommand{\nd}{N}
\newcommand{\ndsquared}{N^2}

\newcommand{\ii}{\mathrm{i}}

\newcommand{\Cov}{\operatorname{Cov}}

\newcommand{\plotsize}{0.95}

\newcommand{\crosssum}{\hspace{-12pt} \sum_{\substack{i,k=1 \\ 1\le k-i \le r_{\max}}}^{N_\mathrm{site}} \hspace{-12pt}}
\usepackage{soul}
\usepackage[loadshadowlibrary, textsize=small, textwidth=3.5cm]{todonotes}

\newcommand{\revPG}[1]{%
  \begingroup
  \ifmmode
    \colorbox{RoyalBlue!20}{$\displaystyle #1$}%
  \else
    \sethlcolor{RoyalBlue!20}%
    \hl{#1}%
  \fi
  \endgroup
}

\begin{document}

\title{Scalable Lindblad Noise Learning via Stochastic Tensor-Network Simulation}
\date{April 2026}

\author{Alejandro R. Ramos Ramos\,\orcidlink{0009-0002-0922-1996}}
\affiliation{Zuse Institute Berlin}

\author{Maximilian Fröhlich\,\orcidlink{0009-0007-5276-2858}}
\affiliation{Weierstrass Institute}

\author{Aaron Sander\,\orcidlink{0009-0007-9166-6113}}
\affiliation{Technical University of Munich}

\author{Robert Wille\,\orcidlink{0000-0002-4993-7860}}
\affiliation{Technical University of Munich}
\affiliation{MQSC}
\affiliation{Software Competence Center Hagenberg (SCCH)}

\author{Martin Eigel\,\orcidlink{0000-0003-2687-4497}}
\affiliation{Weierstrass Institute}

\author{Patrick Gelß\,\orcidlink{0000-0002-3645-9513}}
\affiliation{Zuse Institute Berlin}

\author{Sebastian Pokutta}
\affiliation{Zuse Institute Berlin}

\begin{abstract}
    Learning dissipation rates in large-scale open quantum systems is a major obstacle for near-term quantum technologies, as existing Lindblad estimation methods are typically limited to small system sizes due to the computational complexity of repeatedly solving the Lindblad equation during optimization.
    Here, we propose a scalable noise-learning framework for Lindblad dissipation rates that combines a stochastic simulation method, the Tensor Jump Method (TJM), with gradient-free optimization of a least-squares cost-function defined on time series of local-observable expectation values.
    We demonstrate the approach on two noise models in the Ising model: a site-resolved (local) model, in which independent dissipation rates are learned for each site up to $N_{\mathrm{site}}=16$, and a spatially homogeneous (global) model with only seven parameters, scaled to $N_{\mathrm{site}}=160$ sites.
    We complement these numerical results with a series of exact, provable guarantees: the Frobenius variance of the TJM density-matrix estimator is shown to equal $(1-\mathrm{Tr}[\rho^2])/N_{\mathrm{traj}}$, an exact purity-based characterization of the stochastic estimation error; the corresponding purity evolution is proven to be monotonically non-increasing for Hermitian jump operators; and, under a finite covariance distance assumption, the standard deviation of the cost-function is shown to decrease with system size, so that fewer trajectories are needed to reach a fixed target accuracy as the system grows.
    Together, this combination of scalable numerics and rigorous theoretical guarantees positions TJM-based noise learning as a practical foundation for characterizing dissipation in large quantum devices and for guiding future work on error mitigation and quantum error correction.
\end{abstract}

\maketitle

\section{Introduction}

With the emergence of quantum technologies, understanding and controlling noise in quantum computers has become a central challenge.
In current devices, noise typically scales with the number of qubits and the depth of the quantum circuit, causing errors to accumulate and reduce the fidelity of computations \cite{georgopoulos_2021_modeling}.
Characterizing and modeling the dynamics of the quantum system, including its noise, also known as Lindbladian learning, is therefore essential to assess the performance of quantum hardware and to guide the development of effective error mitigation and quantum error correction strategies \cite{aseguinolaza_2024_error, vandenberg_2023_probabilistic, filippov_2023_scalable}.
A common modeling framework is provided by Lindblad master equations \cite{weimer_2021_simulation}, where unknown Hamiltonian and dissipative parameters must be inferred from experimentally accessible data such as measurement frequencies or time series of local observables.

One family of approaches to Lindbladian learning is based on the Ehrenfest theorem, where a set of observables is measured at different points in time and an objective function is formulated so that the Ehrenfest equation is satisfied as closely as possible, typically reducing the learning problem to a system of equations in the unknown Lindblad parameters.
This method was proposed in \cite{zubida_2021_optimal} for Hamiltonian learning and then extended to the dissipative case in \cite{stilckfranca_2024_efficient, olsacher_2025_hamiltonian}.
The most scalable realization of this idea is due to \textit{Van den Berg et al.}~\cite{berg_2025_largescale}, who learn a Lindbladian on a 156-qubit superconducting processor, the largest hardware demonstration of Lindbladian learning to date.
Their protocol obtains the gradients required by the Ehrenfest equation by fitting measured time series to sums of exponentially damped sinusoids and differentiating the fitted curves, so no explicit simulation of the Lindblad equation is required and the resulting least-squares problem is globally convex.
For scalability, however, the dissipator is restricted by construction to single-site Pauli rates (coefficients coupling two Pauli operators are set to zero unless the operators share support), while only the Hamiltonian may contain genuinely two-local terms; consequently, correlated multi-site noise of the form $L^{ZZ}_{ij} = Z_i \otimes Z_j$ cannot be learned within this framework.
Furthermore, the damped-sinusoid curve fitting requires sufficiently high time resolution to resolve fast coherent oscillations, and can suffer from aliasing otherwise.
A related, ansatz-free protocol with rigorous sample-complexity guarantees for sparse Lindbladians was recently proposed in \cite{ivashkov_2026_ansatzfree}, though it has not yet been demonstrated numerically at scale.

A second family of methods augments the physical Lindbladian model with a neural-network correction to make the resulting non-convex learning landscape more tractable.
\textit{Heightman et al.}~\cite{heightman_2026_lindbladian} tackle Lindbladian learning via maximum-likelihood estimation on Pauli measurements taken at multiple transient times, exploiting the fact that, unlike steady-state data, transient dynamics remain sensitive to the coherent part of the Lindbladian.
A neural differential equation (NDE) correction term is added to the physical model to help navigate the resulting non-convex likelihood landscape, and is then progressively removed during training via a curriculum so that the final model is a genuine, interpretable Lindbladian.
Because this approach requires repeated exact density-matrix simulation of the candidate dynamics, it is currently demonstrated only for systems up to $N=6$ qubits.

A third family of methods, which we refer to as simulation-assisted learning \cite{wang_2024_simulationassisted}, directly simulates the dynamics under a parameterized Lindbladian and fits its parameters by minimizing a discrepancy between predicted and measured data, including maximum-likelihood and related likelihood-based objectives \cite{samach_2022_lindblad,dobrynin_2024_compressedsensing}, divergences between measurement distributions \cite{benav_2020_direct,georgopoulos_2021_modeling}, or mean-squared-deviation fits to time series of local observables \cite{howard_2006_quantum,wang_2024_simulationassisted}.
\textit{Wang and Li}~\cite{wang_2024_simulationassisted} formulate the same trajectory-based least-squares problem as our work, and show that a semi-implicit Euler simulation of the Lindblad equation in Kraus form provides analytic gradients, enabling Levenberg--Marquardt optimization with quadratic convergence near the minimum; because the underlying simulation requires the full density matrix, typical system sizes studied with this approach range from 1 to 6 qubits \cite{benav_2020_direct, wang_2024_simulationassisted}.
\textit{Mangini et al.}~\cite{mangini_2024_tensor} instead characterize the noise channel accompanying a single circuit layer directly from tomographic measurement data, representing it as a locally-purified tensor-network density operator (LPDO) that is trained by gradient-based minimization (automatic differentiation) of a Kullback--Leibler-type loss; because the noise channel of one shallow layer only builds short-range correlations, this representation remains efficient at low bond dimension, allowing this approach to scale up to 20 qubits.
However, extending it to deep circuits or continuous-time Lindbladian dynamics would require the bond dimension to grow with the correlation range, and, unlike Wang and Li's approach or ours, it targets a single fixed channel rather than the generator of the full time evolution.

To overcome the scalability issues shared by all of the above simulation-based approaches, which are ultimately limited by the cost of repeatedly simulating exact density-matrix or state-vector dynamics, we propose a noise-learning framework that combines the Tensor Jump Method (TJM) \cite{sander_2025_largescale} for solving the Lindblad equation with the gradient-free optimization algorithms Covariance Matrix Adaptation Evolution Strategy (CMA-ES) \cite{hansen_2006_cma} and Bayesian Optimization (BO) \cite{frazier_2018_tutorial}, to optimize a mean-squared-deviation cost-function defined on time series of local observables.
We choose time series of local observables because they encode the dynamical information of the system and therefore contain signatures of the dissipative processes governed by the Lindblad equation.
Moreover, measuring local observables is significantly cheaper experimentally than performing full quantum state tomography, which scales exponentially with system size.

The scalability of the simulation is enabled by the TJM, a tensor-network-based quantum trajectory approach that efficiently simulates open quantum many-body dynamics by combining stochastic quantum jumps with matrix product state representations.
This method has been shown to scale to very large system sizes, allowing the simulation of spin models such as the Heisenberg chain with up to $10^3$ sites \cite{sander_2025_largescale}.
This scalability makes TJM particularly suitable for simulation-based learning of dissipative dynamics in regimes that are otherwise inaccessible to exact density-matrix approaches. Unlike Van den Berg et al.'s Ehrenfest-based approach, our cost-function places no structural restriction on the jump operators, so non-local or multi-site correlated noise terms enter simply as additional optimization parameters, a capability we demonstrate explicitly by learning ZZ correlated dephasing on all pairs within radius $r_{\max}=4$.
However, due to the stochastic nature of the TJM, the simulated trajectories introduce statistical fluctuations in the computed observables, which makes the objective function inherently noisy and requires careful treatment during optimization.
We further complement these numerical contributions with an exact Frobenius-variance theorem (Sec.~\ref{sec:exact_frob_var_tjm}) relating the stochastic estimation error of the TJM density-matrix estimator directly to the state purity, a standalone theoretical result with no counterpart in the works discussed above.

As a proof-of-principle application, we use our method to perform Lindbladian learning in the Ising model.
The Ising model serves as a natural benchmark for several reasons: it is one of the most fundamental and well-studied models in many-body physics, it captures essential features of interacting spin systems, and it is widely used to describe quantum simulators and near-term quantum devices.
Furthermore, its relatively simple structure, combined with nontrivial many-body dynamics, makes it an ideal testbed for assessing the ability of our approach to recover dissipative noise parameters from local dynamical observables in large-scale quantum systems.

The rest of this paper is organized as follows.
In Sec. \ref{sec:model_sys}, we introduce the model system by defining the Ising model and providing the necessary details of the Tensor Jump Method used to simulate the open-system dynamics.
In Sec. \ref{sec:opt_frame}, we present the optimization framework, where the least-squares cost-function is defined and additional details of the CMA-ES optimization procedure are discussed.
In Sec. \ref{sec:uniqueness}, we analyze the identifiability of the noise parameters and derive sufficient conditions under which a parameter cannot be recovered from the chosen observables.
In Sec. \ref{sec:rel_error}, we analyze the statistical fluctuations of the cost-function and derive an upper bound for its standard deviation in terms of the number of stochastic trajectories and system size.
In Sec. \ref{sec:results}, we present the results of our Lindbladian learning approach for both local and global noise models.
Finally, in Sec. \ref{sec:conclusions}, we summarize our findings and discuss possible directions for future work.


\section{Model System}\label{sec:model_sys}
Having motivated our scalable noise-learning approach and outlined the structure of the paper, we now turn to the concrete physical setting used throughout this work: an Ising spin chain subject to local and correlated dissipation, whose dynamics we describe using the Lindblad master equation:
\begin{multline}\label{eq:lindblad}
   \frac{d}{dt}\rho(t) = -\frac{\ii}{\hbar} [H_0,\rho] \\
   + \sum_{i=1}^{N_\mathrm{site}} \sum_{j=1}^{N_\mathrm{jump}} \gamma_i^{(j)} \!\left( L_i^{(j)} \rho L_i^{(j)\dag} - \frac{1}{2} \{ L_i^{(j)\dag} L_i^{(j)}, \rho \} \right) \\
   + \crosssum \gamma_{ik}^{ZZ} \!\left( L^{ZZ}_{ik} \rho L^{ZZ\,\dag}_{ik} - \frac{1}{2} \{ L^{ZZ\,\dag}_{ik} L^{ZZ}_{ik}, \rho \} \right)\!,
\end{multline}
where $\rho$ is the density matrix describing the state of the system, $H_0$ is the system Hamiltonian, $L_i^{(j)}$ are single-site jump operators with corresponding rates $\gamma_i^{(j)}$, and $L^{ZZ}_{ik}$ are two-site correlated dephasing (crosstalk) operators with rates $\gamma_{ik}^{ZZ}$.
Furthermore, $\ii$ denotes the imaginary unit, $\hbar$ is the reduced Planck constant, $N_{\mathrm{site}}$ is the number of sites, and $N_{\mathrm{jump}}$ is the number of jump operators per site.
In our case, for the single-site jump operators, we use the Pauli matrices $\sigma_j=(X,Y,Z)$ on each site, so $N_{\mathrm{jump}} = 3$.
This choice is motivated by the fact that Pauli twirling -- routinely applied on real hardware via randomized compiling -- tailors the physical noise experienced by a circuit into an effective Pauli channel, making the Pauli-Lindblad noise model considered here directly applicable to twirled devices \cite{geller_2013_efficient, berg_2024_techniques}.
Restricting the jump operators to the Pauli basis therefore targets a noise model of direct practical relevance, while keeping the parameter space minimal and directly interpretable.
The single-site jump operator $L_i^{(j)}$ is the $j$-th Pauli applied to site $i$:
\begin{equation*}
    L_i^{(j)} = I^{\otimes (i-1)} \otimes \sigma_j \otimes I^{\otimes (N_{\mathrm{site}}-i)} ~.
\end{equation*}
In addition to local noise, correlated ZZ dephasing (crosstalk) between nearby qubits is a leading error source in superconducting platforms \cite{sarovar_2020_detecting}.
We model it via two-site jump operators
\begin{equation*}
    L^{ZZ}_{ik} = I^{\otimes(i-1)} \otimes Z \otimes I^{\otimes(k-i-1)} \otimes Z \otimes I^{\otimes(N_{\mathrm{site}}-k)},
\end{equation*}
acting on sites $i$ and $k$.
Each pair $(i,k)$ with $1\le k-i\le r_{\max}$ (where $r_{\max}=4$) carries an independent rate $\gamma_{ik}^{ZZ}$.
The dissipation rates $\gamma_i^{(j)}$ and $\gamma_{ik}^{ZZ}$ together define the noise model and are the parameters we aim to learn in Equation \eqref{eq:lindblad}.

The Hamiltonian of the Ising model is given by:
\begin{equation*}
H_0 = -K \sum_{i=1}^{N_{\mathrm{site}}-1} Z_i Z_{i+1} - g \sum_{i=1}^{N_{\mathrm{site}}} X_i ~, 
\end{equation*}
$X_i$ and $Z_i$ are the corresponding Pauli matrices applied to the $i$-th site, for example,
\begin{equation*}
    X_i = I^{\otimes (i-1)} \otimes X \otimes I^{\otimes (N_{\mathrm{site}}-i)}~.
\end{equation*}
Here, $K$ sets the strength of the nearest-neighbor interaction, while $g$ is the strength of the transverse field.
In the following, we fix $K = 1$ and $g = 1$.
In physical units, both parameters have dimensions of energy, and time is naturally measured in units of $\hbar/K$.
Using the convention $\hbar = 1$ and measuring energies in units of $K$, the quantities $t$, $T$, and $dt$ are dimensionless.

As previously mentioned, to solve Equation \eqref{eq:lindblad} we use TJM, which is a tensor-network version of the Monte Carlo wave-function method \cite{dalibard_1992_wavefunction}.
It evolves Lindbladian dynamics by unraveling the mixed-state density matrix into trajectories of pure states $(\ket{\psi_p(t)},~ p=1,..., N_{\mathrm{traj}})$, represented as matrix product states (MPS), where each trajectory is governed by a non-Hermitian effective Hamiltonian
\begin{equation}
\begin{split}
    H_{\mathrm{eff}} = H_0 &- \frac{\ii}{2} \sum_{i=1}^{N_\mathrm{site}} \sum_{j=1}^{N_\mathrm{jump}} \gamma_i^{(j)} L_i^{(j)\dagger} L_i^{(j)} \\
    &~~~~~~~~~~~~~~ - \frac{\ii}{2} \crosssum \gamma_{ik}^{ZZ} L^{ZZ\,\dagger}_{ik} L^{ZZ}_{ik},
\end{split}
\end{equation}
by combining the time-dependent variational principle \cite{haegeman_2016_unifying} with a dissipative MPO-MPS contraction.
It then inserts dissipative jumps between time steps, sampled according to a probability distribution.
This distribution is computed from the norm loss after the non-Hermitian time evolution and from the noise parameters $\gamma_i^{(j)}$ and $\gamma_{ik}^{ZZ}$ in the Lindblad equation.
To approximate the expectation value of an observable $O$ at time $t$, we average over all trajectories:
\begin{multline}\label{eq:exp_value}
    \expval{O}(t) =  \mathrm{Tr}[\hat\rho_{N_{\mathrm{traj}}}(t) O] \\
    =   \frac{1}{N_{\mathrm{traj}}} \sum_{p=1}^{N_{\mathrm{traj}}} \mel{\psi_p(t)}{O}{\psi_p(t)}~.
\end{multline}
The approximation has the standard Monte Carlo error $\mathcal{O} \left(\frac{1}{\sqrt{N_{\mathrm{traj}}}}\right)$, a global time-step error of $\mathcal{O}(dt^2)$ arising from the second-order Strang splitting and dynamic TDVP integrator used in the TJM~\cite{sander_2025_largescale}, and a projection error due to the capped bond dimensions of the MPS.
Nevertheless, TJM shows high accuracy even for several hundred sites, as shown in \cite{sander_2025_largescale}.


\section{Optimization Framework}\label{sec:opt_frame}
We formulate the Lindbladian learning problem as finding the set of parameters $\boldsymbol{\gamma}$ that minimizes the mean-squared deviation between a set of reference and model-generated time-dependent expectation values; in other words, we minimize the function:
\begin{multline}\label{eq:cost_2}
    J(\boldsymbol{\gamma})= \frac{1}{N_{\mathrm{site}} N_{\mathrm{ob}} N_{\mathrm{time}}} \sum_{i=1}^{N_{\mathrm{site}}} \sum_{j=1}^{N_{\mathrm{ob}}}  \sum_{k=1}^{N_{\mathrm{time}}} \\
    \left[  \langle O_{ij} \rangle^{(\boldsymbol{\gamma})}(t_k)  -   \langle O_{ij} \rangle^{\mathrm{(ref)}}(t_k) \right]^2~,
\end{multline}
where $\langle O_{ij} \rangle^{(\boldsymbol\gamma)}$ denotes the observable expectation values generated from the noise model characterized by $\gamma_i^{(j)}$; $\langle O_{ij} \rangle^{\mathrm{(ref)}}$ denotes the reference expectation values; $N_{\mathrm{time}}$ is the number of time-discretization points; $N_{\mathrm{ob}}$ is the number of single-site observable operators per site; and $O_{ij}$ is the $j$-th single-site observable operator applied to the $i$-th site.
The initial time-discretization point in the trajectory is $t_1$, and the last one is $t_{N_{\mathrm{time}}}$.
As single-site observable operators, we use once again the Pauli matrices $(N_{\mathrm{ob}}=N_{\mathrm{jump}}=3)$:
\begin{equation*}
    O_{ij} = I^{\otimes (i-1)} \otimes \sigma_j \otimes I^{\otimes (N_{\mathrm{site}}-i)} ~.
\end{equation*}

The reference expectation values in Equation \eqref{eq:cost_2}, $\langle O_{ij} \rangle^{\mathrm{(ref)}}(t_k)$, could be obtained from experimental measurements, but in our case we generate them from reference dissipation rates $\gamma_i^{(j)\,\mathrm{(ref)}}$ and $\gamma_{ik}^{ZZ\,\mathrm{(ref)}}$, which we then try to learn via optimization.
Unless otherwise stated, these reference time-dependent expectation values are obtained by solving Equation \eqref{eq:lindblad} using TJM with $N_{\mathrm{traj}}=4000$, which guarantees sufficient precision.
As the initial density matrix, we choose all sites in the zero state:
\begin{equation*}
\rho(0)=\bigotimes_{j=1}^{N_{\mathrm{site}}}\ketbra{0}{0}_j~.
\end{equation*}
Furthermore, we simulate the system for a total time of $T=6$ ($t_1=0,~ t_{N_{\mathrm{time}}}=6$) with a time step of $dt=0.1$, for a total of $N_{\mathrm{time}}=61$ time-discretization points.
This choice of total time offers a good balance between simulation duration and feature capture in the time-dependent reference expectation values.
The MPS bond dimension is capped at $\chi_{\mathrm{max}} = 8$ throughout all simulations.
Such a modest cap is justified by the dissipative nature of the dynamics, which interrupts the coherent buildup of entanglement: the original TJM paper~\cite{sander_2025_largescale} shows that $\chi=4$ already reproduces the magnetization of a noisy 30-site Heisenberg chain against a numerically-exact MPO solver ($D=400$), and uses $\chi=4$ throughout their $1000$-site simulations. Our choice of $\chi_{\mathrm{max}}=8$ is therefore conservative in comparison.

Due to the stochastic nature of our simulation process, the cost-function in Equation \eqref{eq:cost_2} is noisy (each evaluation almost surely returns slightly different values of $J(\boldsymbol{\gamma})$ for the same $\boldsymbol{\gamma}$, as analyzed in Section~\ref{sec:rel_error}).
Furthermore, $J(\boldsymbol{\gamma})$ is expensive to evaluate because it requires numerically solving the Lindblad equation, and we do not have access to its gradients.
Finite-difference gradients are also time-consuming and of limited use in noisy settings.
As discussed previously, one algorithm we use to optimize $J(\boldsymbol{\gamma})$ is CMA-ES \cite{hansen_2006_cma}, which maintains a multivariate Gaussian search distribution over the parameters, repeatedly samples candidate solutions, and selects the best ones to update the center of the search distribution.
It adapts the covariance matrix (and step size) based on successful search steps, so the sampling distribution learns correlations and stretches and rotates toward promising directions in the landscape.
The other optimization method we use is BO \cite{frazier_2018_tutorial}, which builds a probabilistic surrogate model (typically a Gaussian process) of the objective function and uses it to guide the search.
At each iteration, it selects new candidate points by maximizing an acquisition function that balances exploration (sampling uncertain regions) and exploitation (refining near promising values).
The surrogate is then updated with the new observations, allowing the method to efficiently locate optima with relatively few expensive function evaluations.


\section{Identifiability Analysis}\label{sec:uniqueness}

Characterizing the complete landscape of $J(\boldsymbol{\gamma})$, including general conditions for the existence and uniqueness of a global minimizer, is a challenging problem that depends in a complex way on the noise model, the system dynamics, and the choice of observables.
However, sufficient conditions can be given for the cost-function to carry \emph{no information} about a particular parameter $\gamma_l$:
\begin{equation*}
    \frac{\partial J(\boldsymbol{\gamma})}{\partial \gamma_l} = 0, \quad \forall \boldsymbol{\gamma}~.
\end{equation*}
This condition implies that the cost-function is flat along the $\gamma_l$ direction, and the minimizer is non-unique.
We call $\gamma_l$ \emph{unidentifiable} in this case. 
Expanding the partial derivative of the cost-function with respect to $\gamma_l$ and absorbing the positive prefactor into the zero on the right-hand side gives:
\begin{multline}\label{eq:partial_J_obs}
    \sum_{n=1}^{N_{\mathrm{OBS}}} \sum_{k=1}^{N_{\mathrm{time}}} \left[  \langle O_{n} \rangle^{(\boldsymbol{\gamma})}(t_k)  -   \langle O_{n} \rangle^{\mathrm{(ref)}}(t_k) \right] \\
    \times \frac{\partial }{\partial \gamma_l} \langle O_{n} \rangle^{(\boldsymbol{\gamma})}(t_k) = 0, \quad \forall \boldsymbol{\gamma}~.
\end{multline}
To reduce the number of indices we have collapsed the double summation over single-site observables and sites in Equation~\eqref{eq:cost_2} into a single summation over the total number of observables $N_{\mathrm{OBS}}=N_{\mathrm{site}} \times N_{\mathrm{ob}}$.
Since Equation~\eqref{eq:partial_J_obs} is a weighted sum over observables and time steps of terms proportional to $\partial \langle O_n \rangle^{(\boldsymbol{\gamma})} / \partial \gamma_l$, a sufficient condition for non-identifiability is that
\begin{equation}\label{eq:partial_O}
    \frac{\partial}{\partial \gamma_l} \langle O_n \rangle^{(\boldsymbol{\gamma})}(t_k)= \mathrm{Tr} \left[ \frac{\partial \rho (t_k)}{\partial \gamma_l} O_n\right] = 0, ~~~\forall n, k, \boldsymbol{\gamma}.
\end{equation}
In other words, we are requiring that the time-dependent expectation values of the observables $\{O_n\}$ are not affected by the parameter $\gamma_l$.
We can obtain an expression for $\frac{\partial \rho (t_k)}{\partial \gamma_l}$ by differentiating the Lindblad equation~\eqref{eq:lindblad} with respect to $\gamma_l$, interchanging the time and $\gamma_l$ derivatives and then integrating from $0$ to $t_k$:
\begin{multline}\label{eq:drho_dgamma}
     \frac{\partial \rho(t_k)}{\partial \gamma_l} = \int_0^{t_k} \Bigg[ -\ii \left[H_0, \frac{\partial \rho(s)}{\partial \gamma_l}\right] \\
     + \sum_{m=1}^{N_{\mathrm{JUMPS}}} \gamma_m \Bigg( L_m \frac{\partial \rho(s)}{\partial \gamma_l} L_m^\dagger
     - \frac{1}{2} \left\{ L_m^\dagger L_m, \frac{\partial \rho(s)}{\partial \gamma_l}\right\} \Bigg) \\
    + L_l \rho(s) L_l^\dagger - \frac{1}{2} \left\{ L_l^\dagger L_l, \rho(s) \right\} \Bigg] ds ,
\end{multline}
with initial condition $\partial \rho(0) / \partial \gamma_l = 0$, since $\rho(0)$ does not depend on $\boldsymbol{\gamma}$.
Here we have collapsed the double summation from Equation~\eqref{eq:lindblad} into one summation over the total number of jump operators $N_{\mathrm{JUMPS}} = N_{\mathrm{site}} \times N_{\mathrm{jump}} + N_{\mathrm{ZZ}}$, where $N_{\mathrm{ZZ}} = \sum_{r=1}^{r_{\max}}(N_{\mathrm{site}}-r)$ is the number of ZZ crosstalk operator pairs.
Substituting~\eqref{eq:drho_dgamma} into~\eqref{eq:partial_O} and using cyclicity of the trace, we obtain
\begin{multline}\label{eq:partial_O_int}
    \frac{\partial}{\partial \gamma_l} \langle O_n \rangle^{(\boldsymbol{\gamma})}(t_k) = \int_0^{t_k} \mathrm{Tr}\!\Bigg[ \frac{\partial \rho(s)}{\partial \gamma_l} \\
    \times \left( -\ii [O_n, H_0] + \sum_{m=1}^{N_{\mathrm{JUMPS}}} \gamma_m A_{mn} \right) \\
    + \rho(s)\, A_{ln} \Bigg] ds~,
\end{multline}
where the operator $A_{mn}$ is defined as
\begin{equation}\label{eq:Amn}
    A_{mn} = L_m^\dagger O_n L_m - \frac{1}{2} \left\{ L_m^\dagger L_m,\, O_n \right\}.
\end{equation}
We can now state sufficient conditions under which $\gamma_l$ is unidentifiable.

\begin{proposition}[Sufficient conditions for non-identifiability]\label{prop:non_id}
The parameter $\gamma_l$ is unidentifiable from the set of observables $\{O_n\}$, i.e.,
$\partial \langle O_n \rangle^{(\boldsymbol{\gamma})}(t_k) / \partial \gamma_l = 0$
for all $n$, $t_k$ and $\boldsymbol{\gamma}$,
if either of the following conditions holds:
\begin{enumerate}
    \item \textbf{State insensitivity:} $\partial \rho(t_k) / \partial \gamma_l = 0 \quad \forall t_k, \boldsymbol{\gamma}$.
    \item \textbf{Observable commutativity:} $[O_n, H_0] = 0$ and $A_{mn} = 0 \quad \forall m$.
\end{enumerate}
\end{proposition}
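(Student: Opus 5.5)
Both conditions follow from the two expressions already derived for $\partial_{\gamma_l}\langle O_n\rangle^{(\boldsymbol\gamma)}(t_k)$. The first condition is handled through Equation~\eqref{eq:partial_O}, the second through Equation~\eqref{eq:partial_O_int}.

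\emph{Condition 1.} By linearity of the trace and the fact that $\rho(t_k)$ depends smoothly on $\boldsymbol\gamma$, Equation~\eqref{eq:partial_O} writes $\partial_{\gamma_l}\langle O_n\rangle(t_k)=\mathrm{Tr}[(\partial_{\gamma_l}\rho(t_k))O_n]$. If $\partial_{\gamma_l}\rho(t_k)=0$ for all $t_k$ and $\boldsymbol\gamma$, this trace vanishes for every $n$. Nothing beyond exchanging derivative and trace is needed, and that exchange is valid because the space is finite-dimensional.

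\emph{Condition 2.} I would first rederive Equation~\eqref{eq:partial_O_int} carefully from Equation~\eqref{eq:drho_dgamma}, using cyclicity to move the Lindbladian onto $O_n$. That is, $\mathrm{Tr}[\mathcal L(X)O_n]=\mathrm{Tr}[X\,\mathcal L^\dagger(O_n)]$, where
\[
\mathcal L^\dagger(O)=\ii[H_0,O]+\sum_m\gamma_m A_{m}(O).
\]
Here $A_m(O_n)=A_{mn}$ as defined in Equation~\eqref{eq:Amn}. The inhomogeneous term $L_l\rho L_l^\dagger-\tfrac12\{L_l^\dagger L_l,\rho\}$ likewise gives $\mathrm{Tr}[\rho(s)A_{ln}]$. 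I expect sign conventions in the commutator to be the only bookkeeping issue, and they are irrelevant here because the commutator is assumed to vanish.

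Now suppose $[O_n,H_0]=0$ and $A_{mn}=0$ for all $m$. Then $A_{ln}=0$ as well, since $l$ is one of the indices $m$. So the operator multiplying $\partial_{\gamma_l}\rho(s)$ in Equation~\eqref{eq:partial_O_int} vanishes, and so does the source term $\mathrm{Tr}[\rho(s)A_{ln}]$. The integrand is therefore identically zero, and $\partial_{\gamma_l}\langle O_n\rangle(t_k)=0$ for every $t_k$ and every $\boldsymbol\gamma$.

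Because the hypothesis holds for every observable, this applies to all $n$. Since the conclusion holds for every $\boldsymbol\gamma$, Equation~\eqref{eq:partial_J_obs} is satisfied and $\gamma_l$ is unidentifiable.

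The main point needing care is justifying Equation~\eqref{eq:drho_dgamma} itself. One must differentiate the Lindblad flow with respect to $\gamma_l$ and interchange the derivatives. This follows because $\rho(t)=e^{t\mathcal L_{\boldsymbol\gamma}}\rho(0)$ with $\mathcal L_{\boldsymbol\gamma}$ affine in $\boldsymbol\gamma$ on a finite-dimensional space, so $\rho$ is jointly smooth in $(t,\boldsymbol\gamma)$. The paper takes this step as given, and I would simply cite it.
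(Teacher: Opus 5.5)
Your proposal is correct and matches the paper's proof: Condition~1 makes the trace in Eq.~\eqref{eq:partial_O} vanish directly, and Condition~2 makes both the operator $-\ii[O_n,H_0]+\sum_m\gamma_m A_{mn}$ and the source term $\mathrm{Tr}[\rho(s)A_{ln}]$ in Eq.~\eqref{eq:partial_O_int} vanish. You also spell out two points the paper leaves implicit: the joint smoothness of $\rho$ in $(t,\boldsymbol\gamma)$ that justifies Eq.~\eqref{eq:drho_dgamma}, and that Condition~2 must hold for every $n$ to give the conclusion for all observables.
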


\begin{proof}
Condition~1 makes the expression \eqref{eq:partial_O} vanish directly.
Under condition~2, $A_{mn} = 0$,  in particular, $A_{ln} = 0$, and $[O_n, H_0] = 0$ by assumption.
Hence the integrand in~\eqref{eq:partial_O_int} vanishes.
\end{proof}

\begin{remark}
Condition~1 requires the dissipator of $L_l$ to vanish along the \emph{entire} trajectory, i.e.,
$$L_l \rho(t) L_l^\dagger - \frac{1}{2} \left\{ L_l^\dagger L_l, \rho(t) \right\} = 0 \quad \forall\, t \geq 0,~ \forall\, \boldsymbol{\gamma}~.$$
When this holds, the source term in the ODE for $\partial\rho/\partial\gamma_l$ is identically zero, so $\partial\rho(t)/\partial\gamma_l = 0$ for all $t$ given the zero initial condition $\partial\rho(0)/\partial\gamma_l = 0$.
\begin{enumerate}
    \item A sufficient case is when $\rho(t) = \mathbb{I}/d$ for all $t$ and $L_l$ is a normal operator $\left(\text{i.e.  }[L_l^\dagger, L_l]=0\right)$. The condition on $\rho(t)$ is satisfied if $\rho(0) = \mathbb{I}/d$ and the Lindbladian is unital, $\sum_{m=1}^{N_{\mathrm{JUMPS}}} \gamma_m [L_m,L_m^\dagger]=0$. Since non-identifiability is required for all $\boldsymbol{\gamma}$, and the rates $\gamma_m$ are independent, this must hold for every rate vector, which is equivalent to demanding that each jump operator be normal, $[L_m, L_m^\dagger]=0 \;\forall m$.
    \item Another sufficient case is when $\rho(0)=|\psi_0\rangle\langle\psi_0|:~L_l|\psi_0\rangle = \lambda |\psi_0\rangle$ $($i.e. $|\psi_0\rangle$ is an eigenstate of $L_l$$)$, $[L_l,L_m]=0$ for all $m$, $L_m$ is normal for all $m$, and $[L_l,H_0]=0$.
\end{enumerate}
\end{remark}

\begin{remark}
If we look at the Lindblad--Ehrenfest equation, obtained by differentiating $\langle O_n\rangle = \mathrm{Tr}[\rho\, O_n]$ with respect to time,
\begin{equation}\label{eq:ehrenfest}
    \frac{d}{dt}\langle O_n \rangle = -\ii\langle [O_n, H_0]\rangle + \sum_{m=1}^{N_{\mathrm{JUMPS}}} \gamma_m \langle A_{mn} \rangle~,
\end{equation}
we can see that Condition~2 of Proposition~\ref{prop:non_id} also implies that $\langle O_n \rangle$ remains constant in time for all $\boldsymbol{\gamma}$.
$A_{mn}=0$ is satisfied for example if observable $O_n$ commutes with $L_m$ $([L_m, O_n] = 0)$, which happens when the operators $L_m$ and $O_n$ act on different sites, or when they act on the same site and commute.
\end{remark}

\begin{figure}[t]
    \centering
    \includegraphics[width=\plotsize\linewidth]{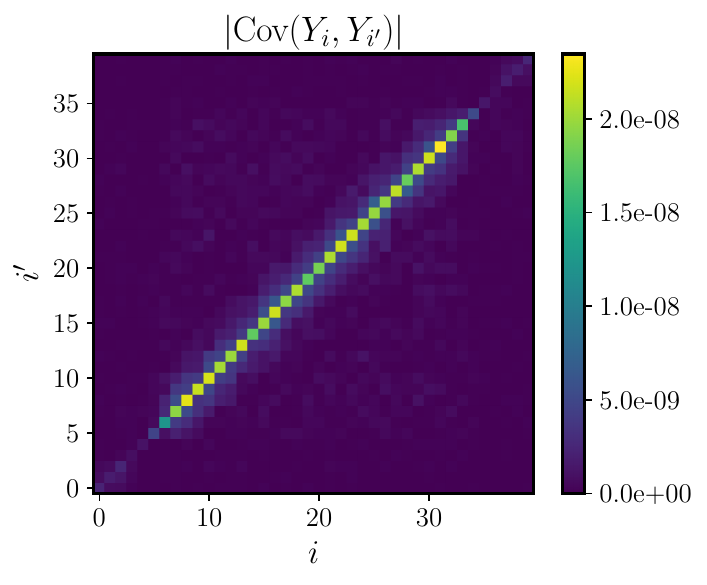}
    \caption{Empirical covariance matrix $|\mathrm{Cov}(Y_{ijk},Y_{i'j'k'})|$ for the $X$ observable ($j=j'=1$) at time steps $k=37$ and $k'=41$, estimated from $N_{\mathrm{traj}}=1000$ trajectories for a system size of $N_{\mathrm{site}}=40$. These specific values of $j$, $j'$, $k$ and $k'$ were selected because, among all time-step pairs, they exhibit the largest maximum covariance distance $\ell$, of about $4$.}
    \label{fig:cov_matrix_ntraj_1000}
\end{figure}

The sufficient conditions for non-identifiability stated in Proposition~\ref{prop:non_id} are intended as a practical rule of thumb for identifying experimental setups to avoid.
In our setup (Pauli and ZZ crosstalk jump operators, Pauli observables, and the Ising Hamiltonian with a nonzero transverse field), none of these conditions are fulfilled: the observables do not commute with $H_0$ due to the transverse-field term, which rules out Condition~2; and although the Lindbladian is unital and the jump operators are normal, the initial state $\rho(0) = |0\rangle\langle 0|^{\otimes N_\mathrm{site}} \neq \mathbb{I}/d$ is not maximally mixed, so $\rho(t)$ does not remain at $\mathbb{I}/d$ and the sufficient case of Condition~1 does not apply.
It should be emphasized, however, that avoiding these sufficient conditions does not guarantee identifiability; they rule out specific failure modes but do not constitute a positive proof that $\gamma_l$ is recoverable from $\{O_n\}$.
The most direct test of identifiability is to scan $J(\boldsymbol{\gamma})$ along the $\gamma_l$ direction and verify that the cost-function has a well-defined minimum, but this becomes impractical when the number of parameters is large.


\section{Statistical Fluctuations of the TJM Estimator and Cost-Function}
 
\subsection{Exact Frobenius variance of the TJM estimator} \label{sec:exact_frob_var_tjm}
 
Before bounding the cost-function variance, we first characterize the
stochastic error of the TJM estimator $\hat\rho_{N_\mathrm{traj}}(t)$ itself,
independently of any optimization objective. This error admits an exact
characterization in terms of the purity of the Lindbladian solution, and
the relevant quantity can be estimated directly from the sampled
trajectories.

\begin{theorem}[Exact Frobenius variance]
\label{thm:exact_frobenius_variance}
Let $\rho(t)$ be the solution of the Lindblad master equation~\eqref{eq:lindblad} at time $t \in [0,T]$, and let
\begin{equation}
\hat\rho_{N_\mathrm{traj}}(t)
= \frac{1}{N_\mathrm{traj}} \sum_{i=1}^{N_\mathrm{traj}}
|\Psi_i(t)\rangle\langle\Psi_i(t)|
\label{eq:rhoN_estimator}
\end{equation}
be the TJM estimator of $\rho(t)$: a random variable whose randomness is
inherited from the $N_\mathrm{traj}$ independently and identically sampled
quantum trajectories $|\Psi_i(t)\rangle$, each satisfying
$\mathbb{E}[|\Psi_i(t)\rangle\langle\Psi_i(t)|] = \rho(t)$ in the
full-bond-dimension limit. Then for every $N_\mathrm{traj} \in \mathbb{N}$
and every $t \in [0,T]$, the variance of $\hat\rho_{N_\mathrm{traj}}(t)$
with respect to the Frobenius norm is given by
\begin{equation}
\mathbb{V}_F\!\left[\hat\rho_{N_\mathrm{traj}}(t)\right]
=
\frac{1 - \mathrm{Tr}[\rho(t)^2]}{N_\mathrm{traj}}.
\label{eq:exact_frobenius_variance}
\end{equation}
\end{theorem}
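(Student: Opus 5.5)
The plan is to treat $\hat\rho_{N_\mathrm{traj}}(t)$ as an empirical mean of i.i.d.\ random matrices in the real Hilbert space of Hermitian operators, equipped with the Hilbert--Schmidt inner product $\langle A,B\rangle_F=\mathrm{Tr}[A^\dagger B]$. The Frobenius variance is
\begin{equation*}
\mathbb{V}_F[\hat\rho]=\mathbb{E}\big[\|\hat\rho-\mathbb{E}\hat\rho\|_F^2\big].
\end{equation*}
Write $P_i=|\Psi_i(t)\rangle\langle\Psi_i(t)|$. By hypothesis $\mathbb{E}[P_i]=\rho(t)$, so $\hat\rho$ is unbiased, and its deviation decomposes as $\hat\rho-\rho=\frac{1}{N_\mathrm{traj}}\sum_i(P_i-\rho)$.

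First I will expand the squared Frobenius norm of this sum into diagonal terms $\mathbb{E}\|P_i-\rho\|_F^2$ and cross terms $\mathbb{E}\,\mathrm{Tr}[(P_i-\rho)(P_j-\rho)]$ for $i\neq j$. By independence and linearity of the trace, each cross term factors as $\mathrm{Tr}[\mathbb{E}(P_i-\rho)\,\mathbb{E}(P_j-\rho)]=0$. Since all matrices are finite-dimensional, exchanging expectation and trace is justified. This leaves
\begin{equation*}
\mathbb{V}_F[\hat\rho]=\frac{1}{N_\mathrm{traj}}\,\mathbb{E}\|P_1-\rho\|_F^2 .
\end{equation*}

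Second, I will compute the single-trajectory variance:
\begin{equation*}
\mathbb{E}\|P_1-\rho\|_F^2=\mathbb{E}\,\mathrm{Tr}[P_1^2]-2\,\mathrm{Tr}[\mathbb{E}[P_1]\rho]+\mathrm{Tr}[\rho^2]=\mathbb{E}\,\mathrm{Tr}[P_1^2]-\mathrm{Tr}[\rho^2].
\end{equation*}
The key input here is that each trajectory is a normalized pure state. Then $P_1^2=P_1$ and $\mathrm{Tr}[P_1^2]=\langle\Psi_1|\Psi_1\rangle^2=1$ almost surely, so the variance becomes $1-\mathrm{Tr}[\rho(t)^2]$. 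This yields Eq.~\eqref{eq:exact_frobenius_variance} for every $N_\mathrm{traj}$ and $t$.

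The main obstacle is justifying the normalization and unbiasedness assumptions in the TJM setting. The non-Hermitian evolution under $H_{\mathrm{eff}}$ loses norm, so I must use the fact that TJM renormalizes the state after each jump or no-jump step. Then $\|\Psi_i(t)\|=1$ at all sampled times, and the unbiasedness $\mathbb{E}[P_i]=\rho(t)$ is exactly the stated full-bond-dimension hypothesis. I will note explicitly that truncation or time-step errors would only affect the identification of $\mathbb{E}[P_i]$ with $\rho(t)$. Even with such errors, the formula continues to hold with $\rho$ replaced by $\mathbb{E}[P_i]$.
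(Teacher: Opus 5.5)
Your proposal is correct and follows essentially the same route as the paper: reduce to $\frac{1}{N_\mathrm{traj}}\mathbb{E}\|P_1-\rho\|_F^2$ by independence (the cross terms vanish), then expand this using $\mathrm{Tr}[P_1^2]=1$ and $\mathbb{E}[P_1]=\rho$ to obtain $1-\mathrm{Tr}[\rho^2]$. Your extra remarks are sound refinements that the paper leaves implicit: trajectories stay normalized because TJM renormalizes after each step, and under truncation the identity still holds with $\rho$ replaced by $\mathbb{E}[P_1]$.
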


\begin{proof}
For ease of notation, we drop the time parameter $t$ and write
$X_i := |\Psi_i\rangle\langle\Psi_i|$ for $i = 1, \dots, N_\mathrm{traj}$.
The trajectories are sampled independently and identically distributed,
and the full-bond-dimension TJM is equivalent to the Lindblad master
equation, so $\mathbb{E}[X_i] = \rho$ for all $i$. Moreover,
$\hat\rho_{N_\mathrm{traj}} = \frac{1}{N_\mathrm{traj}} \sum_i X_i$ is their sample mean. By
independence,
\begin{align}
\mathbb{V}_F\!\left[\hat\rho_{N_\mathrm{traj}}\right]
&= \frac{1}{N_\mathrm{traj}^2}
\sum_{i=1}^{N_\mathrm{traj}} \mathbb{V}_F[X_i]
= \frac{1}{N_\mathrm{traj}}\,\mathbb{V}_F[X_1].
\label{eq:variance_reduction}
\end{align}
Using the Hermiticity of $X_1$ and $\rho$,
\begin{align}
\mathbb{V}_F[X_1]
&= \mathbb{E}\!\left[\|X_1-\rho\|_F^2\right]
= \mathbb{E}\!\left[\mathrm{Tr}[(X_1-\rho)^2 ]\right] \nonumber\\
&= \mathbb{E}\!\left[\mathrm{Tr}[X_1^2]\right]
- 2\,\mathrm{Tr}[\mathbb{E}[X_1]\rho] + \mathrm{Tr}[\rho^2].
\label{eq:single_variance_expand}
\end{align}
Since each $X_1 = |\Psi_1\rangle\langle\Psi_1|$ is a rank-one projector,
$X_1^2 = X_1$ and therefore $\mathrm{Tr}[X_1^2] = 1$. Combined with
$\mathbb{E}[X_1] = \rho$, this gives
\begin{equation}
\mathbb{V}_F[X_1] = 1 - 2\,\mathrm{Tr}[\rho^2] + \mathrm{Tr}[\rho^2]
= 1 - \mathrm{Tr}[\rho^2].
\end{equation}
Substituting into~\eqref{eq:variance_reduction} yields
Eq.~\eqref{eq:exact_frobenius_variance}.
\end{proof}

\noindent The corresponding standard deviation is
\begin{equation}
\sigma_F\!\left[\hat\rho_{N_\mathrm{traj}}(t)\right]
= \sqrt{\frac{1 - \mathrm{Tr}[\rho(t)^2]}{N_\mathrm{traj}}}.
\label{eq:exact_frobenius_std}
\end{equation}
The stochastic estimation error is thus fully determined by the purity
$P(t) := \mathrm{Tr}[\rho(t)^2]$ of the underlying Lindbladian solution.
Although $P(t)$ is in general unknown, it can be estimated without bias
directly from the sampled trajectories:

\begin{corollary}[Unbiased purity estimator]
\label{cor:purity_estimator}
For two independent trajectories $|\Psi_1(t)\rangle$ and
$|\Psi_2(t)\rangle$ sampled by the TJM,
$P(t) = \mathbb{E}\!\left[|\langle\Psi_1(t),\Psi_2(t)\rangle|^2\right]$.
In particular,
\begin{equation}
\widehat{P}_{N_\mathrm{traj}}(t)
:= \frac{1}{N_\mathrm{traj}(N_\mathrm{traj}-1)}
\sum_{\substack{i,j=1\\ i\neq j}}^{N_\mathrm{traj}}
\left|\langle\Psi_i(t),\Psi_j(t)\rangle\right|^2
\label{eq:purity_estimator}
\end{equation}
is an unbiased estimator of $P(t)$, and admits the equivalent form
\begin{equation}
\widehat{P}_{N_\mathrm{traj}}(t)
= \frac{N_\mathrm{traj}\,\mathrm{Tr}\!\left[\hat\rho_{N_\mathrm{traj}}(t)^2\right] - 1}
       {N_\mathrm{traj} - 1}.
\label{eq:purity_estimator_rhoN}
\end{equation}
\end{corollary}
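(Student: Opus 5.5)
The plan is to prove the identity $P(t)=\mathbb{E}\big[|\langle\Psi_1(t),\Psi_2(t)\rangle|^2\big]$ first, and then get unbiasedness of $\widehat P_{N_\mathrm{traj}}$ and its alternative form~\eqref{eq:purity_estimator_rhoN} by linearity and direct algebra.

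As in the proof of Theorem~\ref{thm:exact_frobenius_variance}, I drop the time argument and write $X_i=|\Psi_i\rangle\langle\Psi_i|$, where $\mathbb{E}[X_i]=\rho$ in the full-bond-dimension limit. The key observation is that the overlap is a trace of a product of projectors:
\begin{equation*}
|\langle\Psi_1,\Psi_2\rangle|^2=\langle\Psi_1|\Psi_2\rangle\langle\Psi_2|\Psi_1\rangle=\mathrm{Tr}[X_1X_2].
\end{equation*}
The map $(X_1,X_2)\mapsto\mathrm{Tr}[X_1X_2]$ is bilinear, and $X_1,X_2$ are independent. Hence $\mathbb{E}[X_1\otimes X_2]=\rho\otimes\rho$, and
\begin{equation*}
\mathbb{E}\,\mathrm{Tr}[X_1X_2]=\mathrm{Tr}[\mathbb{E}[X_1]\,\mathbb{E}[X_2]]=\mathrm{Tr}[\rho^2]=P.
\end{equation*}
Because the trajectories are i.i.d., every ordered pair $i\neq j$ has the same law as $(1,2)$. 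There are $N_\mathrm{traj}(N_\mathrm{traj}-1)$ such pairs, so averaging the identity over them shows that $\widehat P_{N_\mathrm{traj}}$ is unbiased.

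For the equivalent form, I expand
\begin{equation*}
N_\mathrm{traj}^2\,\mathrm{Tr}[\hat\rho_{N_\mathrm{traj}}^2]=\sum_{i,j}\mathrm{Tr}[X_iX_j].
\end{equation*}
Each diagonal term is $\mathrm{Tr}[X_i^2]=\mathrm{Tr}[X_i]=1$, because each $X_i$ is a normalized rank-one projector; together they contribute $N_\mathrm{traj}$. The off-diagonal terms give $N_\mathrm{traj}(N_\mathrm{traj}-1)\widehat P_{N_\mathrm{traj}}$. Solving for $\widehat P_{N_\mathrm{traj}}$ gives~\eqref{eq:purity_estimator_rhoN}.

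As a consistency check, taking expectations of this expansion gives
\begin{equation*}
\mathbb{E}\,\mathrm{Tr}[\hat\rho_{N_\mathrm{traj}}^2]=P+\frac{1-P}{N_\mathrm{traj}},
\end{equation*}
which agrees with Theorem~\ref{thm:exact_frobenius_variance} via $\mathbb{E}\|\hat\rho_{N_\mathrm{traj}}\|_F^2=\|\rho\|_F^2+\mathbb{V}_F$.

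The only step that needs care is exchanging the expectation with the product $X_1X_2$. This requires the independence of the trajectories and the bilinearity of the trace; in finite dimensions with bounded random matrices the exchange is immediate. The formula also requires $N_\mathrm{traj}\ge2$, and normalization of the states is needed for $\mathrm{Tr}[X_i^2]=1$.
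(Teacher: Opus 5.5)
Your proof is correct and follows the paper's argument essentially step for step. You identify $|\langle\Psi_i,\Psi_j\rangle|^2=\mathrm{Tr}[X_iX_j]$, use independence to get $\mathbb{E}\,\mathrm{Tr}[X_iX_j]=\mathrm{Tr}[\rho^2]$ for $i\neq j$, and split $\mathrm{Tr}[\hat\rho_{N_\mathrm{traj}}^2]$ into diagonal terms (each equal to $1$) and the off-diagonal sum; the consistency check against Theorem~\ref{thm:exact_frobenius_variance} is a correct extra that the paper does not include.
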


\begin{proof}
For $i \neq j$, independence gives
$\mathbb{E}[\mathrm{Tr}[X_i X_j]] = \mathrm{Tr}[\mathbb{E}[X_i]\mathbb{E}[X_j]]
= \mathrm{Tr}[\rho^2]$. Since
$\mathrm{Tr}[X_i X_j] = |\langle\Psi_i,\Psi_j\rangle|^2$, averaging over
all pairs $i \neq j$ proves unbiasedness. For the second form, expand
\begin{align}
\mathrm{Tr}[\hat\rho_{N_\mathrm{traj}}^2]
&= \frac{1}{N_\mathrm{traj}^2}\sum_{i=1}^{N_\mathrm{traj}}
   \mathrm{Tr}[X_i^2]
 + \frac{1}{N_\mathrm{traj}^2}
   \sum_{\substack{i,j\\i\neq j}}\mathrm{Tr}[X_i X_j] \nonumber\\
&= \frac{1}{N_\mathrm{traj}}
 + \frac{N_\mathrm{traj}-1}{N_\mathrm{traj}}\,
   \widehat{P}_{N_\mathrm{traj}},
\end{align}
since $\mathrm{Tr}[X_i^2] = 1$. Solving for
$\widehat{P}_{N_\mathrm{traj}}$ yields
Eq.~\eqref{eq:purity_estimator_rhoN}.
\end{proof}

The combination of
Theorem~\ref{thm:exact_frobenius_variance} and
Corollary~\ref{cor:purity_estimator} yields a practical diagnostic: at
any time $t$, the stochastic estimation error of the TJM can be
quantified empirically from the same set of trajectories used in the
simulation, without recourse to an analytical model of $\rho(t)$.
 
The purity itself depends on the dissipative part of the dynamics, which
ties the simulation error directly to the noise parameters we aim to
learn. We summarize this dependence in the following two corollaries
(proofs in Appendix~\ref{appendix_lemmas}).

\begin{corollary}[Hermitian jump operators]
\label{cor:hermitian_jump}
If all jump operators in Eq.~\eqref{eq:lindblad} are Hermitian, i.e.
$L_m = L_m^\dagger$ for all $m$, then
\begin{equation}
\frac{d}{dt}P(t)
= -\sum_{m} \gamma_m \,
   \left\|\,[L_m,\rho(t)]\,\right\|_F^2
\;\le\; 0,
\label{eq:purity_monotone}
\end{equation}
i.e., the purity is monotonically non-increasing.
\end{corollary}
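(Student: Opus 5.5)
We compute $\frac{d}{dt}P(t)$ directly from the Lindblad equation~\eqref{eq:lindblad}, written in the collapsed form with jump operators $L_m$ and rates $\gamma_m\ge 0$ used in Sec.~\ref{sec:uniqueness}. Since $\rho$ is Hermitian and the trace is cyclic, $\frac{d}{dt}\mathrm{Tr}[\rho^2] = 2\,\mathrm{Tr}[\rho\,\dot\rho]$. We substitute $\dot\rho$ and treat the Hamiltonian and dissipative parts separately.

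\textbf{Step 1 (coherent part).} The commutator term contributes
\begin{equation*}
-2\ii\,\mathrm{Tr}\big[\rho[H_0,\rho]\big] = -2\ii\big(\mathrm{Tr}[\rho H_0\rho]-\mathrm{Tr}[\rho\rho H_0]\big).
\end{equation*}
This vanishes by cyclicity of the trace, so unitary dynamics never changes purity.

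\textbf{Step 2 (dissipative part).} For each $m$ the contribution is
\begin{equation*}
2\gamma_m\Big(\mathrm{Tr}[\rho L_m\rho L_m^\dagger] - \tfrac12\mathrm{Tr}\big[\rho\{L_m^\dagger L_m,\rho\}\big]\Big)
= 2\gamma_m\big(\mathrm{Tr}[\rho L_m\rho L_m^\dagger] - \mathrm{Tr}[\rho^2 L_m^\dagger L_m]\big),
\end{equation*}
again using cyclicity. Now set $L_m=L_m^\dagger=:L$. Then $C:=[L,\rho]$ satisfies $C^\dagger=[\rho,L]=-C$, so
\begin{equation*}
\|C\|_F^2=\mathrm{Tr}[C^\dagger C]=-\mathrm{Tr}[C^2].
\end{equation*}
Expanding,
\begin{equation*}
-\mathrm{Tr}\big[(L\rho-\rho L)^2\big] = -\mathrm{Tr}[L\rho L\rho]-\mathrm{Tr}[\rho L\rho L]+\mathrm{Tr}[L\rho\rho L]+\mathrm{Tr}[\rho L L\rho]
= 2\,\mathrm{Tr}[\rho^2L^2]-2\,\mathrm{Tr}[\rho L\rho L].
\end{equation*}
Comparing with the dissipative contribution above, each term equals exactly $-\gamma_m\|[L_m,\rho]\|_F^2$. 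Summing over $m$ gives Eq.~\eqref{eq:purity_monotone}.

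\textbf{Step 3 (sign).} Non-positivity follows because every rate satisfies $\gamma_m\ge 0$ and every Frobenius norm is non-negative.

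The only step requiring care is the algebraic identification in Step 2: matching the cross terms and using anti-Hermiticity of the commutator to write $\mathrm{Tr}[C^\dagger C]=-\mathrm{Tr}[C^2]$. Hermiticity of $L_m$ is used twice there: to replace $L_m^\dagger L_m$ by $L_m^2$, and to make $\mathrm{Tr}[\rho L_m\rho L_m^\dagger]$ equal to $\mathrm{Tr}[\rho L_m\rho L_m]$.

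This covers the single-site Pauli operators and the $ZZ$ crosstalk operators of our setting, since both families are Hermitian.
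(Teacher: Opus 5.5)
Your proposal is correct and follows essentially the same route as the paper, which first derives the general purity-evolution identity $\frac{d}{dt}P = 2\sum_m \gamma_m\big(\mathrm{Tr}[\rho L_m\rho L_m^\dagger]-\mathrm{Tr}[\rho^2 L_m^\dagger L_m]\big)$ as a separate theorem. For Hermitian $L_m$, the paper then expands $\|[L_m,\rho]\|_F^2$ by cyclicity into $2\,\mathrm{Tr}[\rho^2L_m^2]-2\,\mathrm{Tr}[\rho L_m\rho L_m]$, exactly as your Step 2 does; your explicit appeal to $\gamma_m\ge 0$ for the sign is a small clarification the paper leaves implicit.
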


\noindent In our setup all jump operators (single-site Paulis and two-site ZZ
crosstalk operators) are Hermitian, so
Eq.~\eqref{eq:purity_monotone} applies directly and identifies the
commutator structure $\|[L_m, \rho]\|_F^2$ as the mechanism through
which each dissipation rate $\gamma_m$ controls the growth of the TJM
estimator error.

\begin{corollary}[Short-time behavior for pure initial states]
\label{cor:short_time}
If the initial state is pure, $\rho(0) = |\psi_0\rangle\langle\psi_0|$,
and $\sum_{m}\gamma_m(\Delta_{\psi_0}L_m)^2 > 0$, then
\begin{multline}
\sigma_F\!\left[\hat\rho_{N_\mathrm{traj}}(t)\right]
= \sqrt{\frac{2t}{N_\mathrm{traj}}
        \sum_{m=1}^{N_{\mathrm{JUMPS}}} \gamma_m\,(\Delta_{\psi_0} L_m)^2} \\
\,+\, O\!\left(\frac{t^{3/2}}{\sqrt{N_\mathrm{traj}}}\right),
\label{eq:short_time_scaling}
\end{multline}
where the sum runs over all $N_{\mathrm{JUMPS}}$ jump operators and
$(\Delta_{\psi_0} L_m)^2
:= \langle\psi_0|L_m^\dagger L_m|\psi_0\rangle
 - |\langle\psi_0|L_m|\psi_0\rangle|^2$
denotes the squared state fluctuation of $L_m$ in the state $|\psi_0\rangle$.
For Hermitian $L_m$, this coincides with the usual quantum-mechanical
variance/uncertainty squared. The hypothesis
$\sum_{m}\gamma_m(\Delta_{\psi_0}L_m)^2 > 0$ is necessary: being a sum of
non-negative terms, the fluctuation sum vanishes precisely when, for
every $m$, either $\gamma_m = 0$ or $|\psi_0\rangle$ is an eigenstate of
$L_m$, and the leading $\sqrt{t}$ term is then absent, leaving
$\sigma_F[\hat\rho_{N_\mathrm{traj}}(t)] = O(t/\sqrt{N_\mathrm{traj}})$.
In the closed-system limit $\gamma_m \equiv 0$ the dynamics is unitary, a
single trajectory is exact, and this residual vanishes identically,
$\sigma_F[\hat\rho_{N_\mathrm{traj}}(t)] = 0$.
\end{corollary}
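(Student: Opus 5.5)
By Theorem~\ref{thm:exact_frobenius_variance}, $\sigma_F[\hat\rho_{N_\mathrm{traj}}(t)]=\sqrt{(1-P(t))/N_\mathrm{traj}}$. The whole statement therefore reduces to a Taylor expansion of the purity $P(t)=\mathrm{Tr}[\rho(t)^2]$ about $t=0$, followed by a square root. Since $\rho(0)$ is pure, $P(0)=1$. I will compute $P'(0)$ exactly and bound $P''$ uniformly on $[0,T]$. This gives
\[
1-P(t)=c\,t+O(t^2),\qquad c=-P'(0).
\]

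\textbf{Step 1: first derivative at $t=0$.} Write $\dot\rho=\mathcal{L}\rho$, so that $P'(t)=2\,\mathrm{Tr}[\rho\,\mathcal{L}\rho]$. The Hamiltonian part contributes $-2\ii\,\mathrm{Tr}[\rho[H_0,\rho]]=0$ by cyclicity. Each dissipator contributes
\[
2\gamma_m\bigl(\mathrm{Tr}[\rho L_m\rho L_m^\dagger]-\mathrm{Tr}[L_m^\dagger L_m\rho^2]\bigr).
\]
At $\rho=|\psi_0\rangle\langle\psi_0|$ we have $\rho^2=\rho$, and this becomes
\[
2\gamma_m\bigl(|\langle\psi_0|L_m|\psi_0\rangle|^2-\langle\psi_0|L_m^\dagger L_m|\psi_0\rangle\bigr)=-2\gamma_m(\Delta_{\psi_0}L_m)^2 .
\]
Hence $c=2\sum_m\gamma_m(\Delta_{\psi_0}L_m)^2$. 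For Hermitian $L_m$ this agrees with Corollary~\ref{cor:hermitian_jump}, since $\|[L_m,\rho(0)]\|_F^2=2(\Delta_{\psi_0}L_m)^2$. The direct computation, however, does not need Hermiticity.

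\textbf{Step 2: remainder.} The system is finite-dimensional and $\mathcal{L}$ is a bounded generator, so $\rho(t)=e^{t\mathcal{L}}\rho(0)$ is smooth (indeed analytic). Moreover,
\[
P''(t)=2\,\mathrm{Tr}[(\mathcal{L}\rho)^2]+2\,\mathrm{Tr}[\rho\,\mathcal{L}^2\rho]
\]
is bounded on $[0,T]$ by a constant depending on $\|\mathcal{L}\|$. Taylor's theorem then gives $1-P(t)=ct+O(t^2)$ uniformly in $N_\mathrm{traj}$.

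\textbf{Step 3: square root.} With $c>0$ (the stated hypothesis), write
\[
\sqrt{ct+O(t^2)}=\sqrt{ct}\,\sqrt{1+O(t)}=\sqrt{ct}+O(t^{3/2}).
\]
Dividing by $\sqrt{N_\mathrm{traj}}$ yields Eq.~\eqref{eq:short_time_scaling}.

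\textbf{Remarks and obstacle.} If $c=0$, then $1-P(t)=O(t^2)$, so $\sigma_F=O(t/\sqrt{N_\mathrm{traj}})$. When all $\gamma_m=0$, the evolution is unitary and $P\equiv1$, so $\sigma_F=0$ exactly by the theorem. For the equivalence claimed in the necessity remark, note that $(\Delta_{\psi_0}L_m)^2=\|(L_m-\langle L_m\rangle)\psi_0\|^2\ge0$, with equality iff $\psi_0$ is an eigenvector of $L_m$.

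I expect the main obstacle to be interpretive rather than computational. The $O(\cdot)$ constants must be uniform, and the theorem's full-bond-dimension assumption must be carried over so that $\mathbb{E}[X_i]=\rho(t)$ holds exactly. The square-root expansion is only legitimate because $c>0$, which is precisely why that hypothesis appears in the statement.
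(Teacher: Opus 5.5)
Your proposal is correct and follows the paper's proof essentially step for step. You reduce via Theorem~\ref{thm:exact_frobenius_variance} to a Taylor expansion of the purity, evaluate $P'(0)=-2\sum_m\gamma_m(\Delta_{\psi_0}L_m)^2$ from the purity-evolution formula at the pure initial state, and expand the square root using the positivity hypothesis; your explicit bound on $P''$ and your checks of the degenerate and closed-system cases simply spell out details the paper states more briefly.
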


\noindent The initial state used throughout our simulations is the pure
product state $\rho(0) = \bigotimes_j |0\rangle\langle 0|_j$, which is an
eigenstate of the diagonal ($Z$ and $ZZ$) jump operators but not of the
transverse $X$/$Y$ Pauli jumps; the fluctuation sum
$\sum_{m}\gamma_m(\Delta_{\psi_0}L_m)^2$ is therefore strictly positive
whenever any transverse jump rate is nonzero, and
Eq.~\eqref{eq:short_time_scaling} then provides an explicit short-time
scaling of the TJM estimator error in terms of the initial-state squared
fluctuations $(\Delta_{\psi_0}L_m)^2$, without the covariance-distance
assumption needed for the cost-function bound below.
 
Equation~\eqref{eq:exact_frobenius_variance} also identifies a regime in
which the TJM's sampling error vanishes: whenever the initial state lies
on a submanifold preserved by the Lindblad evolution (e.g.\ eigenstates
of dephasing operators), the purity remains $\mathrm{Tr}[\rho(t)^2] = 1$
for all $t$, and consequently
$\mathbb{V}_F[\hat\rho_{N_\mathrm{traj}}(t)] = 0$ independently of
$N_\mathrm{traj}$, though other error sources intrinsic to TJM (Trotter
splitting error and MPS bond-dimension truncation) remain unaffected.
This is the all-time strengthening of the degenerate case in
Corollary~\ref{cor:short_time}: there, the vanishing of
$\sum_{m}\gamma_m(\Delta_{\psi_0}L_m)^2$ removes only the leading
$\sqrt{t}$ growth and leaves an $O(t/\sqrt{N_\mathrm{traj}})$ residual,
whereas here the defining condition is preserved for all $t$ and the
error vanishes identically.
 
\bigskip
 
\noindent The Frobenius variance characterizes the TJM estimator
directly. The cost-function $J(\boldsymbol{\gamma})$ defined in
Eq.~\eqref{eq:cost_2}, however, aggregates squared deviations
over sites, observables, and times. We now bound its standard deviation
under an additional finite covariance distance assumption that is
empirically supported in our setup (Fig.~\ref{fig:cov_matrix_ntraj_1000}). 

\subsection{Estimation of the Standard Deviation of the Cost-Function} \label{sec:rel_error}

To simplify the notation, we introduce the shorthand
\begin{align*}
    X_{ijk} &:= \langle O_{ij} \rangle^{(\boldsymbol{\gamma})}(t_k)~, &
    \mu_{ijk} &:= \mathbb{E}[X_{ijk}]~,\\
    \sigma_{ijk} &:= \sigma(X_{ijk})~, &
    \mu_{ijk}^{\mathrm{(ref)}} &:= \langle O_{ij} \rangle^{\mathrm{(ref)}}(t_k)~,\\
    Y_{ijk} &:= \left( X_{ijk} - \mu_{ijk}^{\mathrm{(ref)}} \right)^2~, &
    N &:= N_{\mathrm{site}} N_{\mathrm{ob}} N_{\mathrm{time}}~,
\end{align*}
where $X_{ijk}$, $Y_{ijk}$, and $\hat{J}$ are random variables. The cost-function in Equation~\eqref{eq:cost_2} becomes the random variable
\begin{equation}\label{eq:cost_2_X}
    \hat{J} = \frac{1}{\nd} \sum_{i,j,k} Y_{ijk}~,
\end{equation}
with $i\in\{1,\ldots,N_{\mathrm{site}}\}$, $j\in\{1,\ldots,N_{\mathrm{ob}}\}$, and $k\in\{1,\ldots,N_{\mathrm{time}}\}$.

\begin{figure}[t]
    \centering
    \includegraphics[width=\plotsize\linewidth]{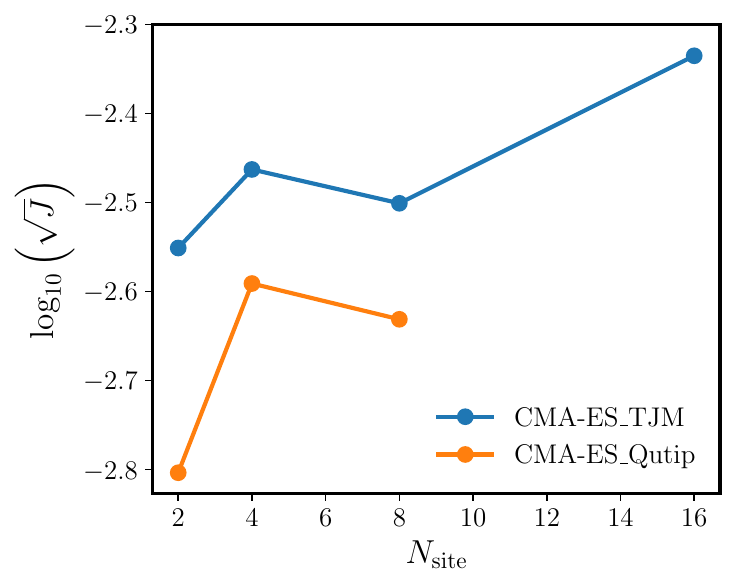}
    \caption{Optimal cost-function value vs. number of sites for a local-noise-model optimization. Both curves are obtained using the CMA-ES optimizer with TJM (blue) and QuTiP (orange) Lindblad solvers. We plot $\log_{10}\left(\sqrt{J}\right)$ rather than $\log_{10}(J)$ so that the plotted quantity has the same units as the observables.}
    \label{fig:loss_local_noise}
\end{figure}

We now derive an upper bound for the variance of the cost:
\begin{align*}
\V\left[\hat{J}\right] &= \V \left[ \frac{1}{\nd} \threesum Y_{ijk} \right] \\
 & = \frac{1}{\ndsquared} \threesum \threesumprime \Cov (Y_{ijk},Y_{i'j'k'})  \\
 & \leq \frac{1}{\ndsquared} \threesum \threesumprime \left| \Cov (Y_{ijk},Y_{i'j'k'}) \right|~.
\end{align*}
Unlike the exact result of
Sec.~\ref{sec:exact_frob_var_tjm}, the following bound requires an
additional structural assumption: covariances between sites separated
by more than a fixed maximum covariance distance $\ell$ decay exponentially. Specifically,
we assume that there exists $\ell \in \mathbb{N}$ and $0 < \epsilon < 1$
such that for $|i - i'| > \ell$, the covariance scales as
$\epsilon^{|i-i'|-\ell}$. We further assume that cross-time and
cross-observable covariances are dominated by the spatial decay; this
is consistent with the empirical covariance matrix shown in
Fig.~\ref{fig:cov_matrix_ntraj_1000}. Under this assumption, bounding the
near-diagonal entries with the Cauchy--Schwarz inequality and the far tail by the
assumed decay, the covariance matrix entries obey:
\begin{equation*}
\left| \Cov (Y_{ijk},Y_{i'j'k'}) \right| \leq
\begin{cases}
\sqrt{\V[Y_{ijk}]\,\V[Y_{i'j'k'}]}, & |i-i'| \le \ell, \\
\epsilon^{|i-i'|-\ell}, & |i-i'| > \ell,
\end{cases}
\end{equation*}
Applying Lemma~\ref{lem:var_y_bound} (all auxiliary results can be found in Appendix~\ref{appendix_lemmas}), which gives $\V[Y_{ijk}] \le c_{ijk}\,\sigma_{ijk}^2$ with
\begin{equation*}
c_{ijk} = \left( M_{ijk} + |\mu_{ijk}| + 2 \left| \mu_{ijk} - \mu^{\mathrm{(ref)}}_{ijk} \right|\right)^2
\end{equation*}
($M_{ijk}$ being the bound for $X_{ijk}$), together with the AM--GM inequality $\sqrt{ab}\le(a+b)/2$, the near-diagonal ($|i-i'|\le \ell$) entries obey
\begin{equation*}
\left| \Cov (Y_{ijk},Y_{i'j'k'}) \right| \leq \frac{c_{ijk}\,\sigma_{ijk}^2 + c_{i'j'k'}\,\sigma_{i'j'k'}^2}{2},
\end{equation*}
while the far tail ($|i-i'|>\ell$) keeps the bound $\epsilon^{|i-i'|-\ell}$. We sum the two
branches separately. The near-diagonal contribution is
$S_{\mathrm{band}} := \sum_{|i-i'|\le \ell}\big|\Cov(Y_{ijk},Y_{i'j'k'})\big|$, where the
site indices obey $|i-i'|\le \ell$ while $j,k,j',k'$ range freely over all
$N_{\mathrm{ob}}N_{\mathrm{time}}$ values on each side. Inserting the AM--GM bound
$|\Cov| \le \tfrac{1}{2}\big(c_{ijk}\sigma_{ijk}^2 + c_{i'j'k'}\sigma_{i'j'k'}^2\big)$
and using that the summation domain is symmetric under
$(i,j,k)\leftrightarrow(i',j',k')$ --- so the two terms sum to the same value ---
gives
\begin{align*}
S_{\mathrm{band}}
&\le \sum_{|i-i'|\le \ell} c_{ijk}\sigma_{ijk}^2 \\
&\le (2\ell+1)\,N_{\mathrm{ob}} N_{\mathrm{time}} \sum_{ijk} c_{ijk}\sigma_{ijk}^2,
\end{align*}
where the second line counts, for each fixed $(i,j,k)$, the
$\#\{i':|i-i'|\le \ell\}\le 2\ell+1$ band partners and the $N_{\mathrm{ob}}N_{\mathrm{time}}$
unrestricted $(j',k')$ pairs (the summand being independent of $i',j',k'$).
Combining this with the far-tail sum, evaluated using Lemma~\ref{lem:sum_eps}, and
dividing by $\ndsquared$,
\begin{align*}
\V[\hat{J}] &\leq \frac{(2\ell+1)\,N_{\mathrm{ob}} N_{\mathrm{time}}}{\ndsquared} \sum_{ijk} c_{ijk}\sigma_{ijk}^2 \\
    &\quad + \frac{N_{\mathrm{ob}}^2 N_{\mathrm{time}}^2}{\ndsquared}\,\frac{2\epsilon}{(1-\epsilon)^2} \Big( N_{\mathrm{site}} + \ell\epsilon \\
    &\qquad\qquad + \epsilon^{N_{\mathrm{site}}-\ell} - N_{\mathrm{site}}\epsilon - \ell - 1 \Big) \\
    &\leq \frac{1}{N_{\mathrm{site}}} \Bigg[
      (2\ell+1)\,\overline{c\sigma^2} \\
    &\qquad + \frac{2\epsilon}{(1-\epsilon)^2} \Bigl( 1 + \frac{\ell\epsilon}{N_{\mathrm{site}}}
      + \frac{\epsilon^{N_{\mathrm{site}}-\ell}}{N_{\mathrm{site}}} \Bigr) \Bigg] ~,
\end{align*}
where
\begin{equation*}
\overline{c\sigma^2} := \frac{1}{\nd}\sum_{ijk} c_{ijk}\,\sigma_{ijk}^2
\end{equation*}
is the average of $c_{ijk}\sigma_{ijk}^2$ over all sites, observables, and time steps.

\begin{figure*}[t]
    \centering
    \includegraphics[width=0.85\textwidth]{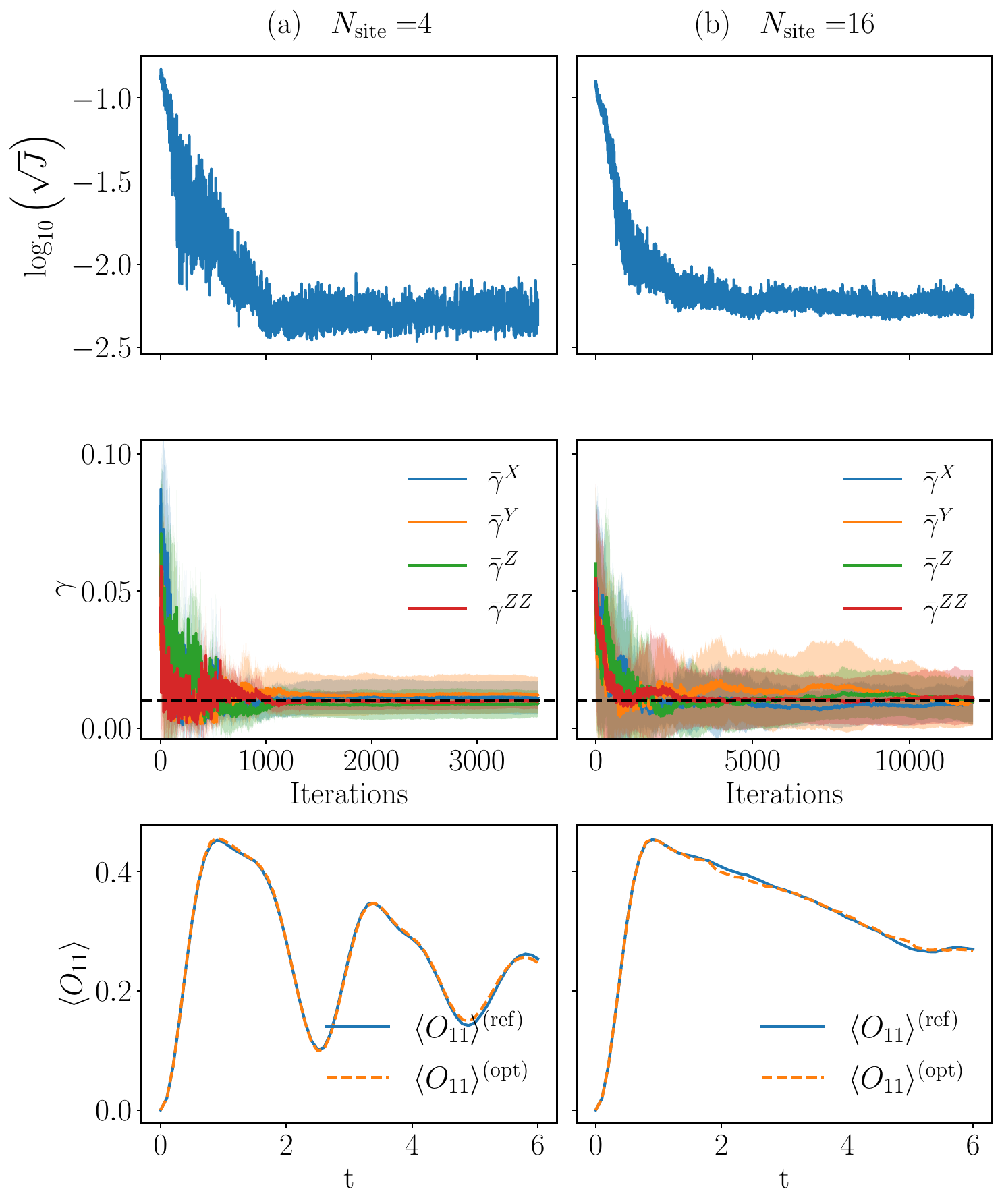}
    \caption{First row: evolution of the cost-function $\sqrt{J}$ during optimization. Second row: per-Pauli-type site-averaged dissipation rate $\bar{\gamma}_j = \frac{1}{N_{\mathrm{site}}}\sum_i \gamma_i^{(j)}$ and pair-averaged crosstalk rate $\bar{\gamma}^{ZZ} = \frac{1}{|\{(i,k)\}|}\sum_{i,k} \gamma_{ik}^{ZZ}$, with the shaded band indicating $\pm$ one standard deviation across sites (pairs); dashed lines mark the reference values $\gamma_i^{(j)\,\mathrm{(ref)}}$ and $\gamma_{ik}^{ZZ\,\mathrm{(ref)}}$. Third row: comparison of reference and optimized trajectories for the observable $\langle O_{11} \rangle = \langle X_1 \rangle$. Column~(a): $N_{\mathrm{site}}=4$; column~(b): $N_{\mathrm{site}}=16$.}
    \label{fig:site_comp_local_noise}
\end{figure*}

In the limit $\epsilon \ll 1$, this expression reduces to:
\begin{equation*}
    \V[\hat{J}] \leq \frac{ (2\ell+1)\,\overline{c\sigma^2} }{N_{\mathrm{site}}}~.
\end{equation*}
The per-observable variances $\sigma_{ijk}^2$ can be
bounded in terms of the purity of the single-site reduced state. The per-observable
variance is defined as
\begin{equation}
\sigma_{ijk}^2
= \mathbb{V}\!\left[\mathrm{Tr}\!\left[\hat\rho_{N_\mathrm{traj}}(t_k)\,O_{ij}\right]\right].
\end{equation}
Since the TJM estimator~\eqref{eq:rhoN_estimator} is the sample mean over
$N_\mathrm{traj}$ independent and identically distributed trajectories, this
reduces to a single-trajectory variance,
\begin{equation}
\sigma_{ijk}^2
= \frac{1}{N_\mathrm{traj}}\,
\mathbb{V}\!\left[\langle\Psi(t_k)|O_{ij}|\Psi(t_k)\rangle\right].
\end{equation}
Since $O_{ij} = I^{\otimes(i-1)}\otimes\sigma_j\otimes I^{\otimes(N_\mathrm{site}-i)}$ acts as
the identity on every site but $i$, Lemma~\ref{lem:local_purity} applies with $\sigma_j$
playing the role of the single-site operator $O$. The single-trajectory variance is then
bounded in terms of the purity $P_i(t_k) := \mathrm{Tr}[\rho_i(t_k)^2]$ of the reduced
single-site state $\rho_i(t_k) := \mathrm{Tr}_{\ne i}[\rho(t_k)]$,
\begin{equation}
\sigma_{ijk}^2 \;\le\; \frac{\|\sigma_j^{\circ}\|_F^2\,\big(1 - P_i(t_k)\big)}{N_\mathrm{traj}} ,
\label{eq:sigma_purity_bound}
\end{equation}
with $\sigma_j^{\circ}$ the traceless part of $\sigma_j$, both computed on the
$2$-dimensional Hilbert space of site $i$ alone. Since the norm entering the bound is
that of the \emph{single-site} operator rather than of its $N_\mathrm{site}$-qubit
embedding $O_{ij}$, this argument applies unchanged to any single-site observable (Pauli
or not) and, more generally, to any observable supported on a fixed number of sites (not
growing with $N_\mathrm{site}$), with $\rho_i(t_k)$ replaced by the reduced state on that
support; only observables whose support itself grows with $N_\mathrm{site}$ would require
a different argument. This bound vanishes as the marginal
purifies ($P_i \to 1$), i.e.\ in the short-time, weak-dissipation regime. Like the
full-system purity of Corollary~\ref{cor:purity_estimator}, $P_i(t_k)$ is directly
estimable from the sampled trajectories. Substituting
Eq.~\eqref{eq:sigma_purity_bound} into $\overline{c\sigma^2}$ then yields
\begin{equation}
\mathbb{V}[\hat{J}] \leq
\frac{2\ell+1}{N_\mathrm{traj} N_\mathrm{site}}\,\frac{1}{\nd}\sum_{ijk} c_{ijk}\|\sigma_j^{\circ}\|_F^2(1 - P_i(t_k)) ,
\end{equation}
and hence the bound for the standard deviation
\begin{equation}
\sigma(\hat{J}) \leq \sqrt{\frac{C}{N_\mathrm{traj} N_\mathrm{site}}} ,
\label{eq:sigma_J_bound}
\end{equation}
with
\begin{equation}
C = \frac{2\ell+1}{\nd}\sum_{ijk} c_{ijk}\,\|\sigma_j^{\circ}\|_F^2\,\big(1 - P_i(t_k)\big) .
\label{eq:C_def}
\end{equation}
For the Pauli matrices used here, $|X_{ijk}| = |\langle O_{ij}\rangle| \le 1$ gives $M_{ijk} = 1$, while $\mathrm{Tr}[\sigma_j] = 0$ gives $\sigma_j^{\circ} = \sigma_j$ and $\|\sigma_j^{\circ}\|_F^2 = \mathrm{Tr}[\sigma_j^2] = 2$. The constant~\eqref{eq:C_def} then reduces to
\begin{equation}
C = 2\,(2\ell+1)\,\overline{c(1-P)} ,
\end{equation}
with $\overline{c(1-P)} := \frac{1}{\nd}\sum_{ijk} c_{ijk}(1-P_i(t_k))$ the average of $c_{ijk}(1-P_i(t_k))$ over all sites, observables, and time steps.

The minimum number of trajectories required to achieve a target standard deviation $\sigma_\mathrm{tar}$ satisfies:
\begin{equation}\label{eq:ntraj_bound}
    N_{\mathrm{traj}} \geq \frac{C}{\sigma_\mathrm{tar}^2 N_{\mathrm{site}}}~.
\end{equation}
This has a direct practical implication: the required number of trajectories decreases inversely with system size, so larger systems can be simulated with less number of trajectories than smaller ones for the same target accuracy.

The finite covariance distance assumption underlying this result is
supported empirically by Fig.~\ref{fig:cov_matrix_ntraj_1000}, which
displays $|\mathrm{Cov}(Y_{ijk}, Y_{i'j'k'})|$ with $j = j' = 1$
(observable $X$) and $k = 37$, $k' = 41$, estimated
for a system size of $N_\mathrm{site} = 40$ using $N_\mathrm{traj} =
1000$. A maximum covariance distance of about 4 sites is visible in
the figure, beyond which the covariance decays rapidly. The
dependence of the maximum covariance distance $\ell$ on the dissipation
strength $\boldsymbol{\gamma}$ and on time is consistent with the
purity dynamics derived in Sec.~\ref{sec:exact_frob_var_tjm}. In
the weak-dissipation limit $\boldsymbol{\gamma} \to 0$ the purity
remains close to one, which suppresses the factor $\overline{c(1-P)}$ in $C$;
at the same time the maximum covariance distance $\ell$ may grow, so the two
effects compete in $C = 2(2\ell+1)\,\overline{c(1-P)}$. Should $\ell$ grow without
bound, the finite covariance distance assumption itself breaks down, and
Eq.~\eqref{eq:sigma_J_bound} must be applied with care.

\section{Results}\label{sec:results}
We evaluate our noise-learning framework on the Ising model introduced in Sec.~\ref{sec:model_sys} under two complementary noise models, presented in Sec.~\ref{sec:local_noise} and Sec.~\ref{sec:global_noise}, respectively.
Their contrasting parameter counts serve two distinct purposes: the local model's high dimensionality stresses the optimizer and lets us cross-validate TJM against an independent, non-tensor-network Monte Carlo solver at small system sizes, while the global model's fixed, small parameter count removes the dimensionality bottleneck and lets us probe scalability up to $N_{\mathrm{site}}=160$ sites.
Together, these two settings test the two main claims of this work: that the learned model reproduces the reference dynamics even when individual rates are not uniquely identifiable, and that the TJM-based pipeline scales to system sizes far beyond the reach of exact simulation.

All simulations were run on nodes equipped with two Intel Xeon Gold 6338 CPUs (32 cores each, 64 physical cores and 128 threads per node), with a peak memory usage of $\sim 13$\,GB for the largest system size of $N_{\mathrm{site}}=160$.
Since the trajectories arising from the unraveling of the density matrix are independent of one another, the large number of available cores allows us to parallelize their computation, distributing all trajectories across multiple cores.
The TJM implementation used throughout this work is publicly available as part of the MQT-YAQS package \cite{ sander_2026_yaqs, wille_2024_mqt}.

\subsection{Local Noise Model}\label{sec:local_noise}
A local noise model is characterized by independent dissipation rates for each site and for each correlated pair.
Specifically, $N_{\mathrm{site}}\times N_{\mathrm{jump}}$ single-site rates $\gamma_i^{(j)}$ are combined with one rate $\gamma_{ik}^{ZZ}$ per pair $(i,k)$ satisfying $1\le k-i\le r_{\max}=4$, yielding $\sum_{r=1}^{\min(4,\,N_{\mathrm{site}}-1)}(N_{\mathrm{site}}-r)$ crosstalk parameters.
The total parameter count is therefore $7N_{\mathrm{site}}-10$ for $N_{\mathrm{site}}\ge 5$.
For this type of noise model, the number of optimized parameters grows linearly with the number of sites, making the optimization in high dimensions challenging for gradient-free optimizers.
For this reason, we generate reference time-dependent expectation values with $\gamma_i^{(j)\,\mathrm{(ref)}} = \gamma_{ik}^{ZZ\,\mathrm{(ref)}} = 0.01$ for relatively small system sizes, $N_{\mathrm{site}}=\{2,~4,~8,~16\}$.
The number of trajectories is chosen according to Equation~\eqref{eq:ntraj_bound} so a target standard deviation of $\sigma_{\mathrm{tar}}=1.5\times10^{-5}$ for the cost-function is achieved, giving $N_{\mathrm{traj}} = \{10929,~5465,~2733,~1367\}$ for the respective system sizes.
For local-noise-model optimization, we choose CMA-ES because it is well-suited to high-dimensional parameter spaces.
In contrast, BO methods are known to scale poorly with increasing dimensionality, both in sample efficiency and computational cost, making them less suitable for this model.
For the simulation method, we compare TJM with the \textit{mcsolve} \cite{__monte} routine in QuTiP \cite{johansson_2012_qutip}, which is a Monte Carlo non-tensor-network solver.
In the optimizations that use QuTiP as the solver, the reference expectation values are also computed with QuTiP using the same number of trajectories as with TJM.

The optimization results are shown in Figure \ref{fig:loss_local_noise}, where we plot the optimal value of the cost-function against the number of sites.
Both solvers achieve small cost-function values, with QuTiP consistently yielding lower values than TJM, and both methods exhibiting a slight tendency for the cost-function to increase with $N_{\mathrm{site}}$.
For $N_{\mathrm{site}}=16$, QuTiP could not be run due to memory limitations, however this size was handled easily by TJM. 

\begin{figure}[t]
    \centering
    \includegraphics[width=\plotsize\linewidth]{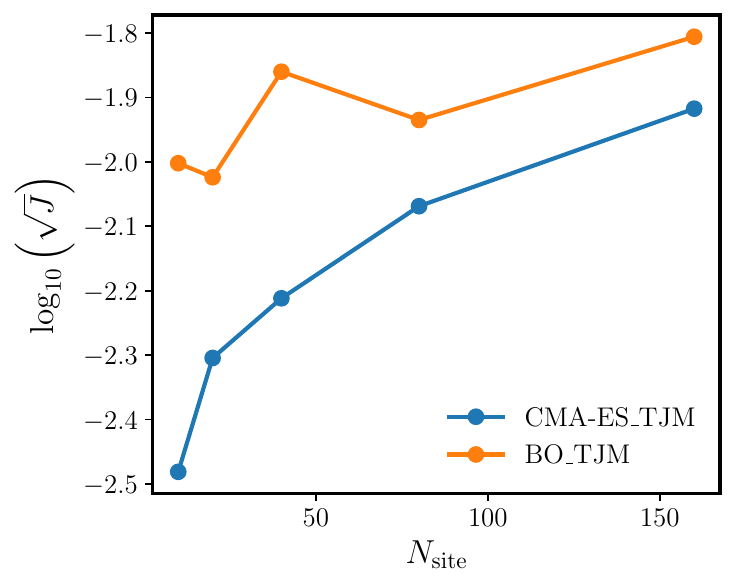}
    \caption{Optimal cost-function value vs. number of sites for a global-noise-model optimization. Both curves are obtained using TJM as Lindblad solver with CMA-ES (blue) and BO (orange) as optimizers.}
    \label{fig:loss_global_noise}
\end{figure}

In Figure \ref{fig:site_comp_local_noise}, we compare the optimization evolution for (a) $N_{\mathrm{site}}=4$ and (b) $N_{\mathrm{site}}=16$ using CMA-ES with TJM.
The last row shows a comparison between the reference trajectory and the trajectory generated using the optimized values of $\gamma_i^{(j)}$ and $\gamma^{ZZ}_{ik}$.
For both system sizes, the cost-function decreases rapidly in the early iterations and subsequently stabilizes at a nearly constant value with small fluctuations, with $N_{\mathrm{site}}=4$ reaching a slightly lower optimal cost compared to $N_{\mathrm{site}}=16$.
For $N_{\mathrm{site}}=4$, the dissipation rates converge to values close to the reference $\gamma_i^{(j)\,\mathrm{(ref)}}$ and $\gamma_{ik}^{ZZ\,\mathrm{(ref)}}$.
For $N_{\mathrm{site}}=16$, the situation is different: the rates do not fully stabilize and exhibit considerable spread across sites, suggesting that the high-dimensional cost-function landscape is relatively flat around the minimum.
Nevertheless, the optimized trajectory agrees closely with the reference one in both cases, indicating that the optimization successfully captures the system dynamics even when the individual rates are not uniquely identified.

\begin{figure*}[t]
    \centering
    \includegraphics[width=0.85\textwidth]{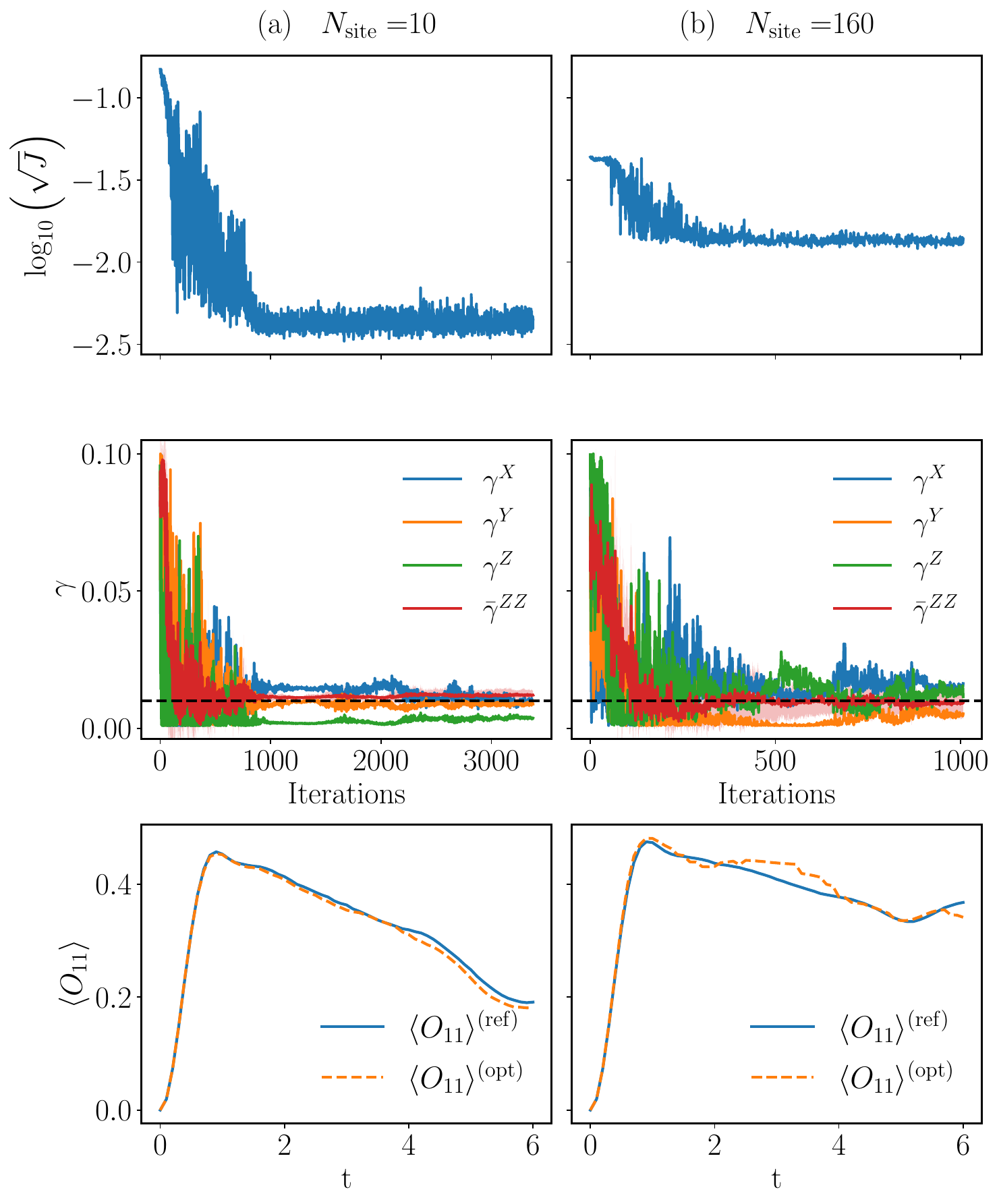}
    \caption{First row: evolution of the cost-function $\sqrt{J}$ during optimization. Second row: evolution of the three global dissipation rates $\gamma_X$, $\gamma_Y$, $\gamma_Z$ and the radius-averaged crosstalk rate $\bar{\gamma}^{ZZ} = \frac{1}{4}\sum_{r=1}^4 \gamma_r^{ZZ}$; dashed lines mark the reference values $\gamma_j^{\mathrm{(ref)}}$ and $\gamma_r^{ZZ\,\mathrm{(ref)}}$. Third row: comparison of reference and optimized trajectories for the observable $\langle O_{11} \rangle = \langle X_1 \rangle$. Column~(a): $N_{\mathrm{site}}=10$; column~(b): $N_{\mathrm{site}}=160$.}
    \label{fig:site_comp_global_noise}
\end{figure*}

\subsection{Global Noise Model}\label{sec:global_noise}
The global noise model assumes that the dissipation rates are spatially homogeneous.
For the single-site Pauli terms, $\gamma_i^{(j)} = \gamma_j$ (one rate per Pauli type), and for the ZZ crosstalk terms, $\gamma_{ik}^{ZZ} = \gamma_r^{ZZ}$ where $r = k-i \in \{1,2,3,4\}$ (one rate per crosstalk radius).
Therefore, only seven parameters are optimized: $\{\gamma_X,~\gamma_Y,~\gamma_Z,~\gamma_1^{ZZ},~\gamma_2^{ZZ},~\gamma_3^{ZZ},~\gamma_4^{ZZ}\}$.
This number is independent of system size, which alleviates the dimensionality challenge and allows us to study larger systems.
Since only seven variables are optimized, BO is a suitable algorithm; in this section, we compare BO with CMA-ES.
In both cases, TJM is used as the Lindblad solver, allowing noise characterization for system sizes $N_{\mathrm{site}}=\{10,~20,~40,~80,~160\}$, with $N_{\mathrm{traj}} = \{2186,~1093,~547,~274,~137\}$ chosen to target the same $\sigma_{\mathrm{tar}}=1.5\times10^{-5}$.
As in the local noise model, we set $\gamma_j^{\mathrm{(ref)}} = 0.01$ for all Pauli types and $\gamma_r^{ZZ\,\mathrm{(ref)}} = 0.01$ for all crosstalk radii $r\in\{1,2,3,4\}$.
The optimization results (analogous to Figure \ref{fig:loss_local_noise}) are shown in Figure \ref{fig:loss_global_noise}.
CMA-ES outperforms BO across all system sizes, likely because the seven-parameter search space already exceeds the regime where BO is most effective.
A decline in performance of CMA-ES with increasing system size is also observed.

In Figure \ref{fig:site_comp_global_noise}, we compare the optimization evolution for (a) $N_{\mathrm{site}}=10$ and (b) $N_{\mathrm{site}}=160$, using CMA-ES/TJM.
For both system sizes, the cost-function decreases rapidly in the early iterations and then stabilizes with small fluctuations, with $N_{\mathrm{site}}=10$ reaching a lower final value than $N_{\mathrm{site}}=160$.
For $N_{\mathrm{site}}=10$, all seven rates, including the four ZZ crosstalk rates, converge to values close to the reference with only small fluctuations, and the optimized trajectory remains close to the reference one.
For $N_{\mathrm{site}}=160$, the rates remain in the vicinity of the reference values but with appreciable fluctuations, and the optimized trajectory reproduces the overall structure of the reference but with noticeable deviations.


\section{Conclusions and Outlook}\label{sec:conclusions}
We introduced a scalable framework for learning Lindblad dissipation rates from time series of local observable expectation values in large-scale open quantum systems.
By combining stochastic simulation techniques with gradient-free optimization of a nonlinear least-squares objective, the approach enables parameter identification in regimes that are out of reach for repeated full-state simulations.
The framework handles both single-site Pauli noise and two-site ZZ correlated dephasing (crosstalk) on all pairs within radius $r_{\max}=4$, and we demonstrated feasibility for system sizes up to $N_{\mathrm{site}}=160$ sites.

Beyond the algorithmic pipeline, we analyzed the statistical fluctuations of the resulting cost-function.
Under a finite covariance distance assumption, we established that the standard deviation of the cost-function decreases proportionally to the inverse square root of the product of the number of sites and trajectories.
This implies that the number of stochastic trajectories required to achieve a fixed target standard deviation decreases proportionally to the inverse of the system size.
As a result, larger systems can be treated more efficiently than with a fixed number of trajectories across all system sizes.

We further proved that, in the full-bond-dimension limit, the Frobenius variance of the TJM density-matrix estimator is exactly $(1 - \mathrm{Tr}[\rho(t)^2])/N_\mathrm{traj}$, providing an exact, purity-based characterization of the stochastic estimation error together with an unbiased empirical diagnostic.

Several extensions are immediate.
On the modeling side, the framework can be generalized to even larger families of jump operators (beyond ZZ pairs), to spatially varying or structured rates, and to settings with partial observability and realistic measurement noise.
On the optimization side, schemes with access to derivatives of the cost-function could substantially improve convergence.
Finally, a natural next step is to integrate the method with experimental data and to systematically study identifiability and confidence intervals, thereby turning scalable Lindblad-rate learning into a practical tool for characterizing dissipation in near-term quantum platforms.

\bibliography{my_library}

\appendix

%
\section{Auxiliary Results}\label{appendix_lemmas}
\begin{lemma}\label{lem:var_y_bound}
    If a random variable $X$ is bounded, $(\exists ~M \in \mathbb{R}:~|X|\leq M)$, and we know $\V[X] = \sigma^2$ and $\E[X] = \mu$, then the variance of the random variable $Y = (X-\mu^{\mathrm{(ref)}})^2$ is bounded by:

    \begin{equation*}
        \V[Y] \leq \sigma^2 c~,
    \end{equation*}
    where $c = \left( M  + |\mu| + 2 \left| \mu - \mu^{\mathrm{(ref)}} \right|\right)^2 $.

\end{lemma}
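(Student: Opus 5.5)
The plan is to center $Y$ around the deterministic quantity $(\mu-\mu^{\mathrm{(ref)}})^2$ and factor the fluctuation. Write $a := \mu - \mu^{\mathrm{(ref)}}$, which is a constant. Then
\begin{equation*}
Y - a^2 = (X-\mu^{\mathrm{(ref)}})^2 - (\mu-\mu^{\mathrm{(ref)}})^2 = (X-\mu)\,(X + \mu - 2\mu^{\mathrm{(ref)}}) = (X-\mu)\,\big(X - \mu + 2a\big).
\end{equation*}
The variance is the minimal mean-square deviation over constants, so $\V[Y] \le \E[(Y-a^2)^2]$. It therefore suffices to bound $\E\big[(X-\mu)^2 (X+\mu-2\mu^{\mathrm{(ref)}})^2\big]$.

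The second factor is then bounded pointwise using $|X|\le M$. We write $X+\mu-2\mu^{\mathrm{(ref)}} = X - \mu + 2a$, and the triangle inequality gives
\begin{equation*}
|X-\mu+2a| \le |X| + |\mu| + 2|a| \le M + |\mu| + 2|\mu-\mu^{\mathrm{(ref)}}| .
\end{equation*}
The square of the right-hand side is exactly $c$. Hence $(Y-a^2)^2 \le c\,(X-\mu)^2$ almost surely. Taking expectations gives $\E[(Y-a^2)^2] \le c\,\E[(X-\mu)^2] = c\,\sigma^2$, and combined with the previous step this yields $\V[Y]\le c\sigma^2$.

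The only point requiring care is choosing the centering constant. If one instead centered at $\E[Y]$ directly, one would have to handle $\E[Y] = \sigma^2 + a^2$ and an awkward extra term. Using the inequality $\V[Y]\le \E[(Y-b)^2]$ for arbitrary constant $b$ with $b=a^2$ avoids this, and makes the factorization a clean difference of squares. Everything else is elementary, so I expect no real obstacle. I also note that the bound is not tight: one could replace $M+|\mu|$ by $\sup|X-\mu|\le 2M$ in other variants, but the stated constant follows directly from the triangle inequality as above.
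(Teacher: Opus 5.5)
Your proof is correct and follows the paper's argument essentially step for step. The paper also bounds $\mathbb{V}[Y]$ by $\mathbb{E}[(Y-\delta^2)^2]$ with $\delta=\mu-\mu^{\mathrm{(ref)}}$, factors the difference of squares as $(X-\mu)(X+\mu-2\mu^{\mathrm{(ref)}})$, and bounds the second factor pointwise via $|(X-\mu)+2\delta|\le M+|\mu|+2|\delta|$ before taking expectations.
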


\begin{proof}
\begin{equation*}
    \V[Y] = \E\left[\left(Y -\E[Y]\right)^2\right]
\end{equation*}
If we define $\delta = \mu - \mu^{\mathrm{(ref)}}$,
\begin{align*}
   \V[Y] & \leq \E\left[\left(Y - \delta^2\right)^2\right], \text{because $\E[Y]$ is the value} \\
    &  ~~~~~~~~~~~~~~~~~~~~~~~~~\text{that minimizes $\E\left[\left(Y -\E[Y]\right)^2\right]$} \\
    & = \E\left[ \left( \left( X - \mu^{\mathrm{(ref)}}\right)^2  - \delta^2 \right)^2\right] \\
    & = \E\left[ \left( X - \mu^{\mathrm{(ref)}} - \delta \right)^2 \left( X - \mu^{\mathrm{(ref)}} + \delta \right)^2 \right] \\
    & = \E\left[ \left( X-\mu \right)^2 \left( X + \mu - 2\mu^{\mathrm{(ref)}} \right)^2 \right]\\
    & \leq \E\left[ \left( X-\mu \right)^2 \right] \left( M  + |\mu| + 2 \left| \mu - \mu^{\mathrm{(ref)}} \right|\right)^2 \\
    & = \sigma^2 \left( M  + |\mu| + 2 \left| \mu - \mu^{\mathrm{(ref)}} \right|\right)^2~,
\end{align*}
where the inequality writes $X + \mu - 2\mu^{\mathrm{(ref)}} = (X-\mu) + 2\delta$ and uses $|X-\mu| \leq |X| + |\mu| \leq M + |\mu|$, so that $\left| X + \mu - 2\mu^{\mathrm{(ref)}} \right| \leq M + |\mu| + 2\left| \mu - \mu^{\mathrm{(ref)}} \right|$.
\end{proof}

\begin{lemma}\label{lem:sum_eps}
Given a matrix $A \in \mathbb{R}^{N \times N}$,
\begin{equation*}
    \qquad
A_{ij} =
\begin{cases}
a, & |i-j| \le \ell, \\
\epsilon^{|i-j|-\ell}, & |i-j| > \ell,
\end{cases}
\end{equation*}
where $\ell \in \mathbb{N}$, $\ell\leq N-1$, and $0 < \epsilon < 1$, then

\begin{align*}
    \sum_{i=1}^{N}\sum_{j=1}^{N} A_{ij} &= a\left[N(2\ell+1) - \ell(\ell+1)\right] \\
    &~~~~~~~+ \frac{2\epsilon}{(1-\epsilon)^2 }\Big( N + \ell\epsilon + \epsilon^{N-\ell} \\
    &~~~~~~~~~~~~~~~~~~~~~~~~~~- N\epsilon - \ell - 1  \Big)~.
\end{align*}

\begin{proof}
    \begin{align*}
         \sum_{i=1}^{N}\sum_{j=1}^{N} A_{ij} &= \sum_{i=1}^{N} a + 2 \sum_{i=1}^{\ell} a(N-i) \\
         &\quad + 2 \sum_{i=1}^{N-\ell-1} (N-\ell-i) \epsilon^i \\
         &=aN+2aN\ell - 2a \sum_{i=1}^{\ell} i + 2(N-\ell)\sum_{i=1}^{N-\ell-1}\epsilon^i \\
         &\qquad -2\sum_{i=1}^{N-\ell-1}i\epsilon^i\\
         &= a\left[N(2\ell+1) - \ell(\ell+1)\right] \\
         &~~~~~~~+ \frac{2\epsilon}{(1-\epsilon)^2 }\Big( N + \ell\epsilon + \epsilon^{N-\ell} \\
         &~~~~~~~~~~~~~~~~~~~~~~~~~~- N\epsilon - \ell - 1  \Big)
    \end{align*}
\end{proof}
\end{lemma}

\begin{lemma}[Single-site variance from local purity]\label{lem:local_purity}
Let $|\Psi(t)\rangle$ be a trajectory sampled by the TJM with
$\E[|\Psi(t)\rangle\langle\Psi(t)|] = \rho(t)$, and let
$\tilde\rho_i(t) := \mathrm{Tr}_{\ne i}\!\left[|\Psi(t)\rangle\langle\Psi(t)|\right]$ be its
reduced density matrix on site $i$, with mean
$\rho_i(t) := \mathrm{Tr}_{\ne i}[\rho(t)] = \E[\tilde\rho_i(t)]$ and single-site purity
$P_i(t) := \mathrm{Tr}[\rho_i(t)^2]$. Let $O$ be any Hermitian operator on the
$2$-dimensional Hilbert space of site $i$, and let
$O_i := I^{\otimes(i-1)}\otimes O\otimes I^{\otimes(N_\mathrm{site}-i)}$ denote its
$N_\mathrm{site}$-qubit embedding, which acts as the identity on every site but $i$.
Then
\begin{equation}
\V\!\left[\langle\Psi(t)|O_i|\Psi(t)\rangle\right] \;\le\; \|O^{\circ}\|_F^2\,\big(1 - P_i(t)\big),
\label{eq:local_purity_bound}
\end{equation}
where $O^{\circ}$ is the traceless part of $O$,
\begin{equation}
O^{\circ} := O - \frac{1}{2}\mathrm{Tr}[O]\,I ,
\end{equation}
and $\|A\|_F^2 := \mathrm{Tr}[A^\dagger A]$.
\end{lemma}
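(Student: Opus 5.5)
The plan is to reduce the single-trajectory variance to a Hilbert--Schmidt inner product between the fluctuation of the reduced state and the traceless part of $O$. Then Cauchy--Schwarz followed by the same expansion used in the proof of Theorem~\ref{thm:exact_frobenius_variance}, now applied on site $i$ alone, gives the bound.

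\textbf{Step 1: rewrite the random variable as a single-site quantity.} Since $O_i$ acts as the identity outside site $i$, the definition of the partial trace gives
\begin{equation*}
Z := \langle\Psi(t)|O_i|\Psi(t)\rangle = \mathrm{Tr}[\tilde\rho_i(t)\,O].
\end{equation*}
Substitute $O = O^{\circ} + \tfrac12\mathrm{Tr}[O]\,I$ and use $\mathrm{Tr}[\tilde\rho_i]=1$. This yields $Z = \mathrm{Tr}[\tilde\rho_i O^{\circ}] + \tfrac12\mathrm{Tr}[O]$. The second term is deterministic, so
\begin{equation*}
\V[Z]=\V\!\left[\mathrm{Tr}[\tilde\rho_i O^{\circ}]\right].
\end{equation*}

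\textbf{Step 2: center and apply Cauchy--Schwarz.} By linearity of trace and expectation, $\E[\mathrm{Tr}[\tilde\rho_i O^{\circ}]]=\mathrm{Tr}[\rho_i O^{\circ}]$. Hence
\begin{equation*}
\V[Z]=\E\!\left[\mathrm{Tr}[(\tilde\rho_i-\rho_i)O^{\circ}]^2\right].
\end{equation*}
Both $\tilde\rho_i-\rho_i$ and $O^{\circ}$ are Hermitian, so the trace is the real Hilbert--Schmidt inner product. Cauchy--Schwarz then gives the pointwise bound
\begin{equation*}
\mathrm{Tr}[(\tilde\rho_i-\rho_i)O^{\circ}]^2 \le \|O^{\circ}\|_F^2\,\|\tilde\rho_i-\rho_i\|_F^2.
\end{equation*}
Taking expectations, we obtain $\V[Z]\le \|O^{\circ}\|_F^2\,\E\|\tilde\rho_i-\rho_i\|_F^2$.

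\textbf{Step 3: evaluate the reduced Frobenius variance.} Expand as in Eq.~\eqref{eq:single_variance_expand}, using $\E[\tilde\rho_i]=\rho_i$:
\begin{equation*}
\E\|\tilde\rho_i-\rho_i\|_F^2=\E\,\mathrm{Tr}[\tilde\rho_i^2]-\mathrm{Tr}[\rho_i^2].
\end{equation*}
This is the one point that differs from Theorem~\ref{thm:exact_frobenius_variance}. The reduced state of a pure trajectory need not be pure, since the trajectory may be entangled across the cut. So we cannot use $\mathrm{Tr}[\tilde\rho_i^2]=1$. Instead we use only that the purity of any density matrix is at most one, which gives $\E\,\mathrm{Tr}[\tilde\rho_i^2]\le 1$ and hence $\E\|\tilde\rho_i-\rho_i\|_F^2\le 1-P_i(t)$. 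Combining this with Step 2 proves Eq.~\eqref{eq:local_purity_bound}. For this reason the statement is an inequality rather than an identity.

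I expect no genuine obstacle. The only care needed is in Step 1, which relies on the trace normalization $\mathrm{Tr}[\tilde\rho_i]=1$ so that the identity component of $O$ contributes no variance, and in checking that the Hermiticity needed for a real inner product holds.
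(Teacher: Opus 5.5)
Your proposal is correct and follows essentially the same route as the paper's proof: rewrite the expectation as $\mathrm{Tr}[O\tilde\rho_i]$, replace $O$ by its traceless part, apply Cauchy--Schwarz for the Hilbert--Schmidt inner product, and bound $\mathbb{E}\,\mathrm{Tr}[\tilde\rho_i^2]\le 1$ to get $1-P_i(t)$. The only difference is the order of two steps: you drop the identity component using $\mathrm{Tr}[\tilde\rho_i]=1$ before centering, whereas the paper centers first and then uses that $\tilde\rho_i-\rho_i$ is traceless.
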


\begin{proof}
Since $O_i$ acts as the identity on every site but $i$, $\langle\Psi|O_i|\Psi\rangle = \mathrm{Tr}[O\,\tilde\rho_i]$,
a random variable over trajectories with mean $\mathrm{Tr}[O\rho_i]$ and deviation
$\mathrm{Tr}[O\,\Delta\rho_i]$, where $\Delta\rho_i := \tilde\rho_i - \rho_i$ is Hermitian and traceless
(a difference of unit-trace density matrices). Because $\Delta\rho_i$ is traceless,
$\mathrm{Tr}[O\Delta\rho_i] = \mathrm{Tr}[O^{\circ}\Delta\rho_i]$, and Cauchy--Schwarz for the
Hilbert--Schmidt inner product gives
$|\mathrm{Tr}[O^{\circ}\Delta\rho_i]| \le \|O^{\circ}\|_F\,\|\Delta\rho_i\|_F$. Hence
\begin{align*}
\V[\langle\Psi|O_i|\Psi\rangle]
&= \E\!\left[(\mathrm{Tr}[O^{\circ}\Delta\rho_i])^2\right] \\
&\le \|O^{\circ}\|_F^2\,\E\!\left[\|\Delta\rho_i\|_F^2\right]
= \|O^{\circ}\|_F^2\,\V_F[\tilde\rho_i].
\end{align*}
Finally, expanding as in Theorem~\ref{thm:exact_frobenius_variance} applied to the
marginal, $\V_F[\tilde\rho_i] = \E[\mathrm{Tr}[\tilde\rho_i^2]] - P_i(t) \le 1 - P_i(t)$, since
$\mathrm{Tr}[\tilde\rho_i^2] \le 1$ for any physical state. Substituting gives
Eq.~\eqref{eq:local_purity_bound}.
\end{proof}

\begin{theorem}[Purity evolution]
\label{thm:purity_evolution}
Let $\rho(t)$ be the solution of the Lindblad master equation
\begin{multline}
\frac{d}{dt}\rho(t)
=
-\ii[H_0,\rho(t)] \\
+
\sum_{m} \gamma_m
\left(
L_m \rho(t) L_m^\dagger
-\frac{1}{2}\{L_m^\dagger L_m,\rho(t)\}
\right).
\label{eq:lindblad_purity}
\end{multline}
Then the purity $P(t)=\mathrm{Tr}[\rho(t)^2]$ satisfies
\begin{multline}
\frac{d}{dt}P(t)
=
2\sum_{m} \gamma_m
\Big(
\mathrm{Tr}[\rho(t)L_m\rho(t)L_m^\dagger] \\
-
\mathrm{Tr}[\rho(t)^2L_m^\dagger L_m]
\Big).
\label{eq:purity_evolution}
\end{multline}
\end{theorem}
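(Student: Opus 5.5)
We differentiate the purity directly and substitute the Lindblad equation~\eqref{eq:lindblad_purity}. Since $P(t)=\mathrm{Tr}[\rho(t)^2]$ and the trace is linear, the product rule together with cyclicity gives
\begin{equation*}
\frac{d}{dt}P(t) = \mathrm{Tr}\!\left[\dot\rho\,\rho + \rho\,\dot\rho\right] = 2\,\mathrm{Tr}\!\left[\rho\,\dot\rho\right],
\end{equation*}
where we suppress the time argument. We then insert the right-hand side of~\eqref{eq:lindblad_purity} for $\dot\rho$ and handle the Hamiltonian and dissipative contributions separately.

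The Hamiltonian part contributes $-2\ii\,\mathrm{Tr}[\rho(H_0\rho-\rho H_0)] = -2\ii\left(\mathrm{Tr}[\rho H_0\rho]-\mathrm{Tr}[\rho^2 H_0]\right)$. By cyclicity, $\mathrm{Tr}[\rho H_0 \rho]=\mathrm{Tr}[\rho^2 H_0]$, so this term vanishes identically. In other words, unitary evolution preserves purity.

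For the $m$-th dissipator, the jump term gives $2\gamma_m\mathrm{Tr}[\rho L_m\rho L_m^\dagger]$ directly. The anticommutator term gives
\begin{equation*}
-\gamma_m\left(\mathrm{Tr}[\rho L_m^\dagger L_m\rho]+\mathrm{Tr}[\rho\rho L_m^\dagger L_m]\right).
\end{equation*}
Cycling the first trace to $\mathrm{Tr}[\rho^2 L_m^\dagger L_m]$ shows that both pieces coincide, so the anticommutator contributes $-2\gamma_m\mathrm{Tr}[\rho^2L_m^\dagger L_m]$. Summing over $m$ yields~\eqref{eq:purity_evolution}.

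The only real subtlety is justifying that $d/dt$ commutes with the trace and that $\rho(t)$ is differentiable. In the paper's finite-dimensional setting this is immediate, because $\rho(t)=e^{t\mathcal L}\rho(0)$ is smooth. Beyond that, the one step needing care is the bookkeeping in the cyclic permutations, so that the anticommutator collapses to a single term with the factor $2$ rather than leaving an asymmetric remainder.

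As a consistency check against Corollary~\ref{cor:hermitian_jump}, take $L_m=L_m^\dagger$. Expanding
\begin{equation*}
\|[L_m,\rho]\|_F^2=\mathrm{Tr}[(L_m\rho-\rho L_m)^\dagger(L_m\rho-\rho L_m)]
\end{equation*}
with $\rho$ Hermitian gives $2\mathrm{Tr}[\rho^2L_m^2]-2\mathrm{Tr}[\rho L_m\rho L_m]$. This recovers~\eqref{eq:purity_monotone} from~\eqref{eq:purity_evolution}.
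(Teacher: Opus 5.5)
Your proposal is correct and follows the paper's proof essentially step for step. Both write $\frac{d}{dt}P = 2\,\mathrm{Tr}[\rho\dot\rho]$, kill the Hamiltonian term by cyclicity, and use $\mathrm{Tr}[\rho L_m^\dagger L_m\rho]=\mathrm{Tr}[\rho^2L_m^\dagger L_m]$ to merge the two anticommutator pieces. Your Hermitian consistency check is likewise identical to the paper's proof of Corollary~\ref{cor:hermitian_jump}.
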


\begin{proof}
Differentiating the purity gives
\begin{multline*}
\frac{d}{dt}P(t)
=
\frac{d}{dt}\mathrm{Tr}[\rho(t)^2]
=
\mathrm{Tr}\!\left[\frac{d}{dt}\rho(t)\rho(t)\right] \\
+
\mathrm{Tr}\!\left[\rho(t)\frac{d}{dt}\rho(t)\right]
=
2\,\mathrm{Tr}\!\left[\rho(t)\frac{d}{dt}\rho(t)\right].
\end{multline*}
Substituting Eq.~(\ref{eq:lindblad_purity}) yields
\begin{multline}
\frac{d}{dt}P(t)
=
-2\ii\,\mathrm{Tr}(\rho(t)[H_0,\rho(t)]) \\
+
2\sum_{m} \gamma_m
\mathrm{Tr}\!\Big[
\rho(t)L_m\rho(t)L_m^\dagger \Big. \\
\Big.
-\frac{1}{2}\rho(t)L_m^\dagger L_m\rho(t)
-\frac{1}{2}\rho(t)^2L_m^\dagger L_m
\Big].
\label{eq:purity_derivative_expand}
\end{multline}
The Hamiltonian part vanishes due to cyclicity of the trace:
\begin{equation}
\mathrm{Tr}[\rho[H_0,\rho]]
=
\mathrm{Tr}[\rho H_0 \rho]
-
\mathrm{Tr}[\rho^2 H_0]
=
0.
\label{eq:hamiltonian_part_vanishes}
\end{equation}
Moreover,
\begin{equation}
\mathrm{Tr}[\rho L_m^\dagger L_m \rho]
=
\mathrm{Tr}[\rho^2 L_m^\dagger L_m].
\label{eq:cyclicity_second_term}
\end{equation}
Substituting Eqs.~(\ref{eq:hamiltonian_part_vanishes}) and (\ref{eq:cyclicity_second_term}) into Eq.~(\ref{eq:purity_derivative_expand}) proves Eq.~(\ref{eq:purity_evolution}).
\end{proof}

\begin{proof}[Proof of Corollary~\ref{cor:hermitian_jump}]
Let $L=L^\dagger$ be Hermitian.
Then
\begin{align}
\|[L,\rho]\|_F^2
&=
\mathrm{Tr}\!\left[
(L\rho-\rho L)^\dagger(L\rho-\rho L)
\right] \nonumber \\
&=
\mathrm{Tr}\!\left[
(\rho L-L\rho)(L\rho-\rho L)
\right] \nonumber \\
&=
\mathrm{Tr}[\rho L^2 \rho]
-\mathrm{Tr}[\rho L \rho L] \nonumber \\
&\quad
-\mathrm{Tr}[L\rho L\rho]
+\mathrm{Tr}[L\rho^2L]. \nonumber
\end{align}
By cyclicity of the trace,
\begin{equation*}
\begin{aligned}
&\mathrm{Tr}[\rho L^2 \rho]
=
\mathrm{Tr}[\rho^2 L^2], \\
&\mathrm{Tr}[L\rho^2L]
=
\mathrm{Tr}[\rho^2 L^2],\\
&\mathrm{Tr}[L\rho L\rho]
=
\mathrm{Tr}[\rho L \rho L].
\end{aligned}
\end{equation*}
Hence,
\begin{equation*}
\|[L,\rho]\|_F^2
=
2\,\mathrm{Tr}[\rho^2L^2]
-
2\,\mathrm{Tr}[\rho L \rho L].
\end{equation*}
Using Eq.~(\ref{eq:purity_evolution}) with $L_m=L_m^\dagger$ therefore yields
\begin{align}
\frac{d}{dt}P(t)
&=
2\sum_{m} \gamma_m
\left(
\mathrm{Tr}[\rho L_m \rho L_m]
-
\mathrm{Tr}[\rho^2 L_m^2]
\right) \nonumber \\
&=
-\sum_{m} \gamma_m \|[L_m,\rho]\|_F^2,
\end{align}
which proves the claim.
\end{proof}

\begin{proof}[Proof of Corollary~\ref{cor:short_time}]
Since $\rho(0)^2=\rho(0)$, we have $P(0)=1$.
Evaluating Eq.~(\ref{eq:purity_evolution}) at $t=0$ gives
\begin{align}
\frac{d}{dt}P(0)
&=
2\sum_{m} \gamma_m
\Big(
\mathrm{Tr}[\rho(0)L_m\rho(0)L_m^\dagger] \Big. \nonumber \\
&\quad \Big.
-
\mathrm{Tr}[\rho(0)L_m^\dagger L_m]
\Big). \nonumber
\end{align}
Since $\rho(0)=|\psi_0\rangle\langle\psi_0|$, it follows that
\begin{equation*}
\mathrm{Tr}[\rho(0)L_m\rho(0)L_m^\dagger]
=
|\langle \psi_0|L_m|\psi_0\rangle|^2
\end{equation*}
and
\begin{equation*}
\mathrm{Tr}[\rho(0)L_m^\dagger L_m]
=
\langle \psi_0|L_m^\dagger L_m|\psi_0\rangle.
\end{equation*}
Hence,
\begin{multline*}
\frac{d}{dt}P(0)
=
-2\sum_{m} \gamma_m
\Big(
\langle \psi_0|L_m^\dagger L_m|\psi_0\rangle \Big. \\
\Big.
-
|\langle \psi_0|L_m|\psi_0\rangle|^2
\Big).
\end{multline*}
A first-order Taylor expansion around $t=0$ gives
\begin{equation}
P(t)=1-2t\,S+O(t^2),
\qquad
S:=\sum_m \gamma_m\,(\Delta_{\psi_0} L_m)^2.
\end{equation}
Theorem~\ref{thm:exact_frobenius_variance} then gives
$1-P(t)=2t\,S+O(t^2)$, so that
\begin{equation*}
\sigma_F\!\left[\hat\rho_{N_\mathrm{traj}}(t)\right]
= \sqrt{\frac{2t\,S+O(t^2)}{N_\mathrm{traj}}}.
\end{equation*}
Since $S>0$ by hypothesis, we factor out $\sqrt{2t\,S/N_\mathrm{traj}}$ and
expand $\sqrt{1+O(t)}=1+O(t)$, which proves
Eq.~\eqref{eq:short_time_scaling}.
\end{proof}

\end{document}